\documentclass[journal]{IEEEtran}
\usepackage{fix-cm}
\usepackage{cite}
\usepackage{amsmath,amssymb,amsfonts}
\usepackage{algorithmic}
\usepackage{graphicx,color}
\usepackage{textcomp}
\usepackage{tabularx}
\usepackage{amssymb}
\usepackage[utf8]{inputenc}
\usepackage{color}
\usepackage{authblk}
\usepackage{amsthm}
\newtheorem{theorem}{Theorem}
\usepackage{xcolor}
\usepackage{array, balance}
\usepackage{graphicx}
\usepackage{subcaption}
\usepackage{algorithmic,algorithm}
\usepackage{amsmath}
\usepackage[capitalise]{cleveref}
\usepackage{booktabs}
\usepackage{float}
\usepackage{multirow}
\usepackage{lettrine}
\usepackage{colortbl}
\usepackage{hhline}
\usepackage{nccmath}
\newcommand{\tref}[1]{TABLE \ref{#1}}

\newcommand{\fref}[1]{Figure \ref{#1}}
\NewDocumentCommand{\smallmath}{O{0.85} m}{%
  \scalebox{#1}{$\displaystyle #2$}%
}
\Crefname{figure}{Fig.}{Figs.}

\newcommand{\updated}[1]{{#1}}

\def\BibTeX{{\rm B\kern-.05em{\sc i\kern-.025em b}\kern-.08em
    T\kern-.1667em\lower.7ex\hbox{E}\kern-.125emX}}
\AtBeginDocument{\definecolor{ojcolor}{cmyk}{0.93,0.59,0.15,0.02}}

\begin{document}
\newcommand{\mytitle}{CIVIC: Cooperative Immersion Via Intelligent Credit-sharing in DRL-Powered Metaverse}
\title{\mytitle}


\author{Amr Aboeleneen\IEEEauthorrefmark{1}\IEEEmembership{(Member, IEEE)}, Mohamed Abdallah\IEEEauthorrefmark{1}\IEEEmembership{(Senior, IEEE)}, Aiman Erbad\IEEEauthorrefmark{2}\IEEEmembership{(Senior, IEEE)} and Amr Salem\IEEEauthorrefmark{2}\IEEEmembership{(Senior, IEEE)}
\thanks {This work was made possible by the GSRA grant \# GSRA9-L-1-0518-22022 from the Qatar National Research Fund (a member of Qatar Foundation). The findings achieved herein are solely the responsibility of the authors.\\ Amr Aboeleneen and Mohamed Abdallah are with the College of Science and Engineering, Hamad Bin Khalifa University, Qatar.\\ Aiman Erbad and Amr Salem are with the College of Engineering, Qatar University, Qatar.  }
}


\markboth{\mytitle}{A.Aboeleneen \textit{et al.}}
\maketitle

\begin{abstract}
The Metaverse faces complex resource allocation challenges due to diverse Virtual Environments (VEs), Digital Twins (DTs), dynamic user demands, and strict immersion needs. This paper introduces CIVIC (Cooperative Immersion Via Intelligent Credit-sharing), a novel framework optimizing resource sharing among multiple Metaverse Service Providers (MSPs) to enhance user immersion. Unlike existing methods, CIVIC integrates VE rendering, DT synchronization, credit sharing, and immersion-aware provisioning within a cooperative multi-MSP model. The resource allocation problem is formulated as two NP-hard challenges: a non-cooperative setting where MSPs operate independently and a cooperative setting utilizing a General Credit Pool (GCP) for dynamic resource sharing. Using Deep Reinforcement Learning (DRL) for tuning resources and managing cooperating MSPs, CIVIC achieves 12-36\% higher request completion, 23-70\% higher fulfillment rates, 20-60\% more served clients, and up to 51\% more fairly distributed requests, all with competitive costs. Extensive experiments demonstrate CIVIC’s resilience, adaptability, and robust performance under dynamic load conditions and unexpected demand surges, making it suitable for real-world distributed Metaverse infrastructures.


\end{abstract}

\begin{IEEEkeywords}
Deep Reinforcement Learning, Immersion, Metaverse, Multi Service-Provider, Resource Allocation, Cooperative Systems, Digital Twins.
\end{IEEEkeywords}

\section{INTRODUCTION}
\label{sec:intro}
\IEEEPARstart{T}he Metaverse represents a paradigm shift in human–computer interaction, emerging as a persistent, immersive digital universe where physical and virtual realities converge. This transformative technology enables users to participate in various virtual rooms (VRooms) of different virtual environments (VEs), ranging from educational settings to entertainment venues, and interact with digital twins (DTs) of real-world objects \cite{9880528}. As the global Metaverse market is projected to reach \$936 billion by 2030, growing at a compound annual growth rate (CAGR) of 46\% \cite{grandview2025metaverse}, its applications span diverse domains including remote work, social interaction, gaming, education, and industrial training.

At the core of the Metaverse experience lies the critical challenge of immersion, which refers to the degree to which users feel present and engaged within their VRooms. Achieving high immersion levels requires seamless integration of multiple technological components, such as high-fidelity VEs, accurate DTs with real-time synchronization, and responsive user interactions. 

These requirements place unprecedented demands on computational and network resources, as Metaverse applications must maintain frame rates exceeding 60 frames per second (FPS), deliver ultra-low latency below 20 milliseconds, and support bandwidth-intensive operations reaching 4.6 terabits per second for educational applications alone \cite{alves2021beyond}.
The resource allocation challenge in the Metaverse is further complicated by its inherently heterogeneous and dynamic nature \cite{aaa}. VRooms vary significantly in their experience and resource requirements. For instance, a virtual art gallery demands high visual fidelity for static displays, while a gaming arena requires rapid response times and complex physics simulations \cite{bbb}. Moreover, user populations shift dynamically as users navigate between different VRooms according to their interests, resulting in sudden surges that may affect the overall experience. These dynamics create a complex optimization problem that Metaverse service providers (MSPs) must solve to maintain immersion and ensure the quality of perceptual experience (QoPE) for all users. The challenge becomes even more pronounced in multi-MSP scenarios, where multiple MSPs operate independently with varying budgets and resource capacities. Without coordination, this leads to inefficient resource utilization, where some MSPs struggle with demand surges, while others have idle capacity. Additionally, the complexity of jointly optimizing computational resources and network resources for VE and DT, while maintaining immersion thresholds across heterogeneous VEs, creates an NP-hard optimization problem (Proven later) that traditional approaches cannot solve effectively.


Although many articles in the literature have focused on Metaverse resource management (see Section \ref{sec:rlw}), existing approaches have not provided a holistic view of the problem and often neglect certain aspects. Many rely on oversimplified immersion models that use basic metrics to approximate quality without capturing the nuanced interplay between VE rendering, DT fidelity, and their requirements. Resource requirement scaling is often assumed to be linear, overlooking various aspects, including real-world nonlinearities such as density-dependent effects and heterogeneous room-specific demands. Additionally, they typically assume a single provider operating in isolation.

In this paper, we address these critical resource allocation challenges by proposing a Cooperative Immersion Via Intelligent Credit-sharing (CIVIC). This novel framework optimizes resource allocation and coordination across multiple MSPs to enhance users' immersion. 
Our key contributions can be summarized as follows:
\begin{itemize}
    \item First, to better quantify user immersion with DTs' fidelity, we introduce a more comprehensive immersion model that combines structural similarity, video streaming smoothness, and DT fidelity (structural, behavioral, and temporal).
    \item Secondly, we present novel non-linear resource scaling functions that incorporate density effects, resource sharing efficiency, and heterogeneous VRoom requirements to identify the needed resources for each VRoom.
    \item Then, we formulate the resource allocation challenge as two distinct non-convex NP-hard optimization problems designed to increase the number of fulfilled requests for all Heads with varying demands for the two deployment options: non-cooperative, where MSPs independently optimize allocation of different resources for their clients (i.e., Heads), and cooperative, which additionally incorporates inter-MSP resource sharing via a novel shared credit pool. 
    \item We then propose two Deep Reinforcement Learning (DRL) solutions that learn optimal policies for both non-cooperative and cooperative settings, while adhering to the system dynamics and different constraints. 
    \item To validate our solutions, rigorous testing and experimentation were performed on both solutions and showed significant improvements in request completion, fulfillment, resource balance, overall system efficiency, and adaptability compared to baseline methods.
\end{itemize}

Following this section, the paper proceeds as follows: Section \ref{sec:rlw} reviews relevant literature in the field. Section \ref{sec:ds_civic} introduces our system model, followed by Section \ref{sec:probform}, which formalizes the optimization problems. Section \ref{sec:ourmodel_civic} presents our DRL approach to address these challenges. Performance evaluation is conducted in Section \ref{sec:expeval}, and Section \ref{sec:conc_civic} concludes the work with the discussion of key findings and future directions.

\begin{table*}[t]
\centering
\caption{Table of Notations}
\label{tab:notations}
\begin{tabularx}{\textwidth}{lX}
\toprule
\textbf{Notation} & \textbf{Description} \\
\midrule
$\mathcal{M},\mathcal{H},\mathcal{V},\mathcal{C},\mathcal{T} $ & Set of MSPs, Heads, VRooms, VRCs and Timesteps \\
$\mathcal{C}_h \subseteq \mathcal{C}$ & Subset of VRCs linked to Head $h$. \\
$\mathcal{H}_m \subseteq \mathcal{H}$ & Set of Heads served by MSP $m$. \\
$\mathcal{D}_h$ & Set of IoT devices and sensors for Head $h$. \\
$I_h^{(t)}$ & Immersion level of Head $h$ at time $t$. \\
$w_q, w_f, w_a, w_{\sigma}, w_{\eta}$ & Weights for QOPE, frame rate, accuracy, SSIM and VMAF. \\
$\hat{Q}_h^{(t)}, \hat{F}_h^{(t)}, \hat{A}_h^{(t)}$ & Normalized QOPE, frame rate, and accuracy for head $h$ at $t$. \\
$Q_h,F_h,A_h$ & QoPE, frame rate, and Accuracy of DTs within the Head $h$. \\
$\sigma_h$ and $\eta_h$ & SSIM and VMAF. \\
$\omega_h$ & Average rotation speed of clients in Head $h$. \\
$B_h^{(t)}$ and $f_h^{(t)}$ & Bitrate and frame rate for Head $h$.\\
$\varepsilon_h^{(t)}, \beta_h^{(t)}, \tau_h^{(t)}$ & Structural, behavioral, and temporal accuracy for DTs inside $h$ VRoom. \\
$\mathbb{M}_\varepsilon, \mathbb{N}_\varepsilon$ & Sets for structural accuracy calculation. \\
$\mathbb{M}_\beta, \mathbb{N}_\beta$ & Sets for behavioral accuracy calculation. \\
$f_d, \lambda$ & Update frequency of DT $d$ and Rate of accuracy reduction. \\
${R_{s}^{y}}^{(t)}$ & General form of total scaled resource of type $R$ for a VRoom component $y$. Examples: ${Comp}_{b}^{VE}$ and ${Net}_{b}^{DT}$ \\
$R_{b}^{y}$ & General form of base resource of type $R$ for a VRoom component $y$ \\
$f_{v_{min}},f_{v_{max}}$ & Minimum and maximum VRoom frame rate. \\
$\kappa$ & Resource-specific scaling exponent (e.g., $\kappa_{Comp}\ \text{and}\ \kappa_{Net}$). \\
$C_h, V_{h_{max}}$ & Number of clients of the head $h$ and the maximum user capacity of VRoom $v$. \\
$\iota,\Lambda, \vartheta$  & Resource sharing efficiency factor, user density impact, and user scaling parameter. \\
$K_{Comp}$,$K_{Net}$ & unit prices for computing and network resources.\\
$\varphi$ & General Scaling Factor.\\
$P_v, O_v, A_v$ & VRoom's VE parameters: Polygon count, physical objects, interaction points. \\
$N_d, SV_d, U_d$ &VRoom's DT parameters: Number of sensors, state variables, update frequency. \\
$e_0, \dots, e_4$ & SSIM calculation coefficients \cite{vmafssim}. \\
$j_1, \dots, j_4$ & VMAF calculation coefficients \cite{vmafssim}. \\
${\delta_m^{(t)}}$ & {Donation fraction of MSP $m$'s post-allocation surplus.} \\
${z_h^{(t)}}$ & {Binary Head-level satisfaction indicator at time $t$.} \\
${S_m^{(t)}, W_m^{(t)}}$ & {Post-allocation local surplus and deficit of MSP $m$ at time $t$.} \\
${D_m^{(t)}}$ & {Actual donation of MSP $m$ to the GCP at time $t$.} \\
{$\mathcal{B}_{v_h}$} & {Discrete behavioral-accuracy grid associated with Head $h$'s VRoom.} \\
{$\Delta_\beta$} & {Step size of the behavioral-accuracy grid, set to $0.05$.} \\
$GCP^{(t)}$ & General Credit Pool's resources at time $t$. \\
${Budget_m^{(t)}}$ & {Remaining budget available to MSP $m$ at the beginning of time $t$.} \\
$L_m^{(t)}$ & Binary MSP-level request fulfillment indicator. \\
$\mathcal{H}_{m,\text{active}}^{(t)}$ & Set of active heads for MSP $m$ at time $t$. \\
$\mathbb{I}_{\text{request}}(h,t)$ & Request indicator function. \\
$I_{\text{threshold}}$ & Immersion threshold requirement. \\
$\phi(.)$ & Piece-wise efficiency function. \\
$w_{\text{imm}}, w_{\text{fin}}$ & Reward weights. \\
$\boldsymbol{P}$ and $\boldsymbol{\bar{P}}$& our optimization problem for non-cooperative and cooperative settings. \\
$\mathbb{S}, \mathbb{A}, \mathbb{P}, \ \text{and} \ \mathbb{R} $ & State space, Action space, transition probability, and reward. \\
\updated{MINLP} & {Mixed-Integer Nonlinear Programming.} \\
\updated{$r_{arr} ,r_{dep}$} & {Arrival and Departure in the double Poisson process.}\\
\bottomrule
\end{tabularx}
\end{table*}

\section{RELATED WORK}
\label{sec:rlw}
\begin{table*}[t]
\begingroup
    \centering
    \caption{\updated{Representative Related Work vs. our paper (Focus, Technique, and Key Capabilities)}}
    \label{tab:rw_compare}
    \renewcommand{\arraystretch}{1.08}
    \scriptsize
    \begin{tabularx}{\textwidth}{p{1.6cm} X X c c c c c c}
    \toprule
    \textbf{Ref.} & \textbf{\shortstack{Goal}} & \textbf{Technique} &
    \textbf{\shortstack{Comp\\+Comm}} & \textbf{\shortstack{Cross\\-MSP}} & \textbf{\shortstack{Incent.\\ /Credit}} &
    \textbf{\shortstack{Imm\\(VE+DT)}} & \textbf{\shortstack{Heterog.\\VRooms}} & \textbf{\shortstack{Nonlinear\\Scaling}} \\
    \midrule

    \cite{refeig} & Maximize Meta-Immersion (QoE) under xURLLC + contract constraints & Contract theory + optimization
    & \checkmark & $\times$ & \checkmark & \checkmark\ (VE) & $\times$ & $\times$ \\
    \cite{refelven} & Maximize QoE via attention-aware network resource allocation & Analytical model + optimization
    & $\times$ & $\times$ & $\times$ & \checkmark\ (VE) & $\times$ & $\times$ \\
%
    \cite{reftweelve} & Maximize long-term VR QoE under interaction-latency constraint & DRL
    & \checkmark & $\times$ & $\times$ & \checkmark\ (VE) & $\times$ & $\times$ \\
    \cite{feng2024resource} & Maximize MI while minimizing device energy (Pareto trade-off) & Multi-agent multi-objective RL
    & \checkmark & $\times$ & $\times$ & \checkmark\ (VE) & $\times$ & $\times$ \\
    \cite{9880566} & Maximize MSP utility via reliable CDC rendering (mitigate stragglers) & Blockchain + game-theoretic design
    & \checkmark & $\times$ & \checkmark & $\times$ & $\times$ & $\times$ \\
    \cite{10250875} & Maximize data utility while minimizing vehicular energy (AR Metaverse) & Optimization
    & \checkmark & $\times$ & $\times$ & $\times$ & $\times$ & $\times$ \\
    \cite{chu2024dynamic,aaa} & Maximize long-term provider revenue/acceptance via multi-tier allocation & DRL (sMDP)
    & \checkmark & $\times$ & $\times$ & $\times$ & \checkmark & \checkmark \\
%
    \cite{reftwenyfour} & Minimize delay and maximize video quality to meet target QoE & Multi-objective DRL
    & \checkmark & $\times$ & $\times$ & \checkmark\ (VE) & $\times$ & $\times$ \\
    \cite{10294071} & Maximize VR streaming QoE under bandwidth via robust FoV tile selection & Contextual bandit
    & $\times$ & $\times$ & $\times$ & \checkmark\ (VE) & $\times$ & $\times$ \\
%
    \cite{9838736} & Maximize social welfare while minimizing auction overhead (VR services) & Double Dutch auction + DRL
    & \checkmark & $\times$ & \checkmark & \checkmark\ (VE) & $\times$ & $\times$ \\
%
    \textbf{Our paper} & \textbf{Maximize holistic immersion (VE+DT) under multi-MSP budgets via cooperation} &
    \textbf{DRL + shared-credit cooperation}
    & \textbf{\checkmark} & \textbf{\checkmark} & \textbf{\checkmark} & \textbf{\checkmark\ (VE+DT)}
    & \textbf{\checkmark} & \textbf{\checkmark} \\
    \bottomrule
    \end{tabularx}
    \par\addvspace{5pt}
    \noindent{\scriptsize Comp+Comm: jointly considers computing and communication. Cross-MSP: cooperation across distinct MSP administrative domains. Heterog. VRooms: explicitly models room types with different base requirements. Nonlinear Scaling: models resource growth with user density/room complexity beyond linear forms.}
\endgroup
\end{table*}

\updated{The Metaverse differs from conventional applications (e.g., online gaming and video playback) because it requires persistent, bidirectional interaction between users and a shared virtual world under stringent latency, bandwidth, and compute constraints \cite{9880528}. Consequently, closely related work can be organized around two tightly coupled questions: \emph{(i) how immersion/QoE is quantified} (i.e., what the system should optimize), and \emph{(ii) how resources are allocated to realize that immersion under constraints} (i.e., how the system should provision compute and networking). Table~\ref{tab:rw_compare} summarizes representative efforts and highlights the gap addressed by CIVIC.

On the \emph{immersion calculation} side, prior studies have proposed QoE-style indicators spanning network KPIs (rate, loss, and latency) \cite{reffif,reftwennine,refthirty,refeig}, media/device factors (rendering delay, resolution, and video-quality measures) \cite{refelven,reftweelve,reftwenyfour}, and perception/engagement variables tied to content and encoding decisions \cite{shamim2025evaluating,anwar2020impact,desai2017qoe,liu2023qoe}. These formulations motivate immersion-centric optimization, but they are commonly \emph{VE-centric}: they primarily capture what users visually perceive and how the VE is delivered. In parallel, DT work has developed fidelity/identicality dimensions and evaluation methodologies \cite{dts,dts2,nostand,nostand2}, yet DT fidelity is often treated as a separate modeling/evaluation concern rather than an explicit term in immersion-driven provisioning. CIVIC bridges this split by making immersion explicitly \emph{VE--DT coupled}, integrating VE perceptual quality with DT structural/behavioral/temporal fidelity within a unified objective (Table~\ref{tab:rw_compare}, columns ``Imm. Metric'' and ``DT Fidelity'').

On the \emph{resource allocation} side, a large body of work optimizes communication and/or computing resources to improve QoE/immersion. Network-centric schemes study attention-aware/customized provisioning and radio-side QoE optimization \cite{9999298,refelven,refeig}. Rendering/offloading research explores where computation should occur, including MEC-enabled VR optimization \cite{reftweelve}, multi-objective learning that trades immersion against device energy \cite{feng2024resource}, collaborative/reliable computation for rendering \cite{9880566}, and vehicular AR Metaverse resource optimization under energy constraints \cite{10250875}. Streaming-oriented work improves VE delivery via FoV/tile decisions and constrained learning \cite{10294071,reftwenyfour}. Broader multi-tier frameworks incorporate admission/placement and cross-tier allocation using DRL \cite{chu2024dynamic,aaa}. While these directions substantially advance immersion-driven provisioning, they typically assume a single provider/controller domain, treat DT synchronization quality as outside the optimized experience metric, and do not explicitly model heterogeneous VRooms with distinct VE/DT primitives and \emph{nonlinear} scaling with user density and fidelity targets. CIVIC targets this missing intersection by (i) jointly provisioning VE and DT resources to maximize holistic immersion, (ii) explicitly modeling heterogeneous VRooms and nonlinear scaling laws that govern both rendering and DT synchronization cost, and (iii) addressing a multi-MSP setting with time-varying localized scarcity via a shared-credit cooperation mechanism that enables cross-MSP resilience (Table~\ref{tab:rw_compare}, columns ``Cross-MSP'' and ``Incent./Credit'').}

\begin{figure*}
    \centering
    \includegraphics[width=0.7\linewidth]{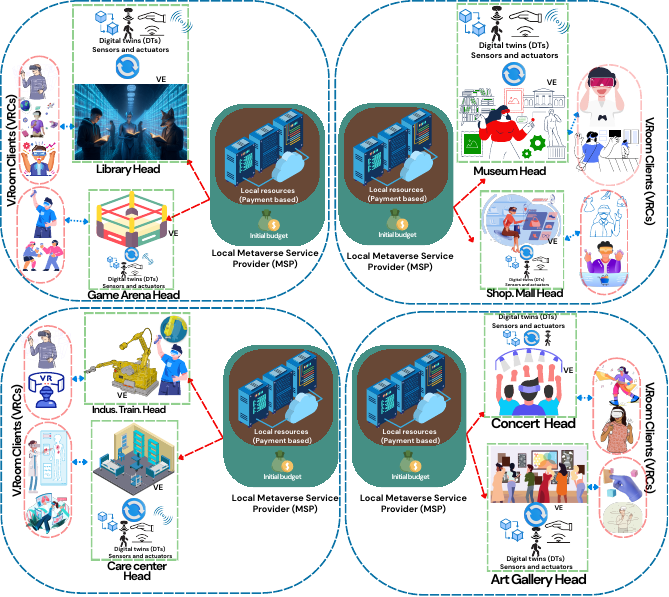}
    \caption{The main system model, composed of geographically distributed Metaverse service providers, serving multiple Heads who host virtual Rooms of their physical entity and enhance their clients' immersion with their DTs synchronization}
    \label{fig:sysmod}
\end{figure*}

\section{SYSTEM MODEL}
\label{sec:ds_civic}
In this section, we introduce the proposed system model as in \fref{fig:sysmod}. We begin by describing the main entities, their terminologies, and goals, followed by the two problem formulations, $\boldsymbol{P}$ and $\boldsymbol{\bar{P}}$, which integrate these components together for both non-cooperative and cooperative settings, respectively. For better readability, the table of notations is supplied in \tref{tab:notations}.

\subsection{Network Architecture and Entities}

As shown in \fref{fig:sysmod}, our Metaverse environment comprises four primary entities: Metaverse Service Providers (MSPs), Cluster Heads (Heads), Virtual Rooms (VRooms), and Virtual Room Clients (VRCs).

\paragraph{Metaverse Service Providers (MSPs)} 
MSPs serve as essential gateways, enabling user access to the Metaverse by delivering the virtualized computing and networking infrastructure required for Heads to render and operate their VRooms and DTs. Each MSP operates with an initial budget (allocated by higher management) that provides access to payment-based cloud resources, securing the computational capacity and network connectivity needed to support Metaverse interactions. The MSPs are strategically positioned across different geographical locations and can operate either as independent entities or as collaborative units under the management of an organization. In our case, we considered both scenarios in $\boldsymbol{P}$ and $\boldsymbol{\bar{P}}$. We represent the complete set of MSPs mathematically as $\mathcal{M} = \{1, \dots, m, \dots, M\}$, where each element $m$ denotes a specific MSP and $M$ indicates the total number of MSPs deployed in the system. In our system, we assume that each MSP serves multiple Heads.

\paragraph{Cluster Heads (Heads) and Virtual Rooms (VRooms)} Heads are physical establishments such as libraries, gaming arenas, or art galleries, which virtualize their spaces in the Metaverse for various purposes, including monetary gain from virtual services and enhanced user experience. Each Head converts its real-world physical environment into a digital VRoom through different methods, such as 3D scanning and reconstruction. Similar to MSPs, the set of all Heads is denoted as $\mathcal{H} = \{1, \dots, h, \dots, H\}$, where $h$ refers to a specific Head and $H$ is their count within the system. 

\updated{
VRooms, digital counterparts of the Heads' physical spaces, offer virtual interaction opportunities to Virtual Room Clients (VRCs). Each VRoom consists of two coupled subsystems: (i) the \emph{Virtual Environment (VE)}, i.e., the rendered 3D scene and interactive logic that VRCs visually perceive and navigate (layout, objects, avatars, and viewpoint updates), and (ii) the \emph{Digital Twin (DT)} layer, i.e., the synchronized digital representation of physical objects inside the Head’s venue enabled by IoT devices, sensors, and actuators (e.g., cameras/LiDAR/touch sensors and haptics) to provide real-time state alignment and interaction feedback. The set of VRooms is defined as $\mathcal{V} = \{v_1, \dots, v_h, \dots, v_H\}$ explicitly mapping each Head $h$ to its unique VRoom $v_h$. We envision that each type of VRoom (e.g., library and game arena) has unique resource requirements that impact service quality, scalability, and cost, among others. The key VRoom parameters we proposed are shown in \tref{tab:vrooms_params}. The set of DTs available for the Head $h$ is denoted as $\mathcal{D}_h = \{1, \dots, d, \dots, D_h\}$, where $d$ represents an arbitrary DT within Head $h$’s venue and $D_h$ is the total number of DTs in $h$’s VRoom. Since Heads typically lack adequate computational power to render their VRooms and maintain DT synchronization at immersive quality, they depend on MSPs for these services. Each Head is exclusively served by an MSP. We define $\mathcal{H}_m \subseteq \mathcal{H}$ as the set of Heads served by MSP $m$, satisfying $\sum_{m=1}^{M} |\mathcal{H}_m| = H$.
}

\paragraph{Virtual Reality Clients (VRCs)} VRCs are users who access and interact with the Heads' VRooms in the Metaverse, primarily using HMDs and VR sets. The complete set of VRCs is indicated as $\mathcal{C} = \{1, \dots, c, \dots, C\}$, with the subset of VRCs linked to the Head $h$ symbolized by $\mathcal{C}_h \subseteq \mathcal{C}$. Each VRC dynamically navigates from one VRoom to another based on their interest.

\updated{To summarize, our environment comprises multiple MSPs that serve Heads, who in turn attract various VRCs to their respective VRooms. This hierarchy reflects practical Metaverse deployment and service provisioning, while enabling tractable control in a multi-provider setting. MSPs allocate virtualized compute and network resources at the Head level, where a common virtual view is rendered for all associated VRCs and where budgets and DT-enabled VRooms are managed, rather than at the individual user level, which aligns with realistic B2B service models. Assuming one VRoom type per Head ensures a well-defined resource–immersion profile and isolates heterogeneity across environments; venues supporting multiple room types can be equivalently represented by multiple logical Heads or time-varying VRoom profiles.}

In the following subsections, we present critical resources and quality metrics along with associated costs.

\subsection{Immersion, Resource Scaling and Cost Models}
\subsubsection{Immersion}
Immersion refers to the degree of presence and engagement experienced by a VRC within the VRoom. Although no standardized metric or list of metrics exists for calculating the immersion of a user inside the Metaverse \cite{nostand,nostand2}, we envision that an immersive experience involves two primary components: VE quality (i.e., visual quality and motion smoothness of the VRCs inside the VE) and DT fidelity (i.e., DT look, feel and feedback, synchronization including sound, vibration among others).

When a Head $h$ requests the rendering services for VRoom $v_h$ from MSP $m$ at time $t$, the immersion experienced by $h$, denoted as $I_h^{(t)}$ is calculated as a weighted sum (the weights, $w_{q}+ w_{f}+w_{a} = 1$, adjusted according component importance ) of three normalized metrics as:

\begin{equation}
I_h^{(t)} = w_{q}\hat{Q}_h^{(t)} + w_{f}\hat{F}_h^{(t)} + w_{a}\hat{A}_h^{(t)}
\end{equation}

Here, $\hat{Q}_h$ represents the normalized quality of perceptual experience (QoPE) based on VRoom-specific visual and streaming requirements and is defined as:

\begin{equation}
Q_h = w_{\sigma}\sigma_h^{(t)} + w_{\eta}\eta_h^{(t)}
\end{equation}

$\sigma_h$ denotes structural similarity index (SSIM) and evaluates image quality (e.g., structure, texture, clarity), while $\eta_h$ video multi-method assessment fusion (VMAF) assesses video streaming smoothness and clarity, both are defined in \cite{vmafssim} as:

\begin{equation}
\label{eq:ssim}
\begin{aligned}
\sigma_h^{(t)} = &\max\left(e_0, 1 - (e_1 + e_2 \omega_h) B_h^{(t)^{-(e_3 + e_4 \omega_h)}} \right)\\
    \eta_h^{(t)} = & \min\left(100, j_1 + j_2 \omega_h + j_3 B_h^{(t)} + j_4 \omega_h B_h^{(t)} \right) 
\end{aligned}
\end{equation}
Herein, $e_0, \dots, e_4$ and $j_1, \dots, j_4$ are the coefficients for SSIM and VMAF, $\omega_h$ represents the average rotation speed of clients in the Head's VRoom and is considered a fixed parameter, $B_h^{(t)}$ is the bitrate (in Mbps) allocated for Head $h$ (by MSP $m$) at time $t$. $w_{\sigma}$ and $w_{\eta}$ are prioritization weights that sum to 1 (i.e., $w_{\sigma}+w_{\eta} = 1$), defined per VRoom type (e.g., gaming arenas prioritize VMAF, while art galleries prioritize SSIM).

$\hat{F}_h$ represents the normalized responsiveness of the VRoom, calculated from the actual frame rate $f_h$, normalized against the VRoom's specified minimum and maximum frame rates.

Finally, $A_h^{(t)}$ captures the (unnormalized) accuracy of DTs within Head $h$ at time $t$, computed as the harmonic mean of structural accuracy $\varepsilon_h^{(t)}$, behavioral accuracy $\beta_h^{(t)}$, and temporal accuracy $\tau_h^{(t)}$:

\begin{equation}
\label{eqn:dtacc}
A_h^{(t)} = \begin{cases}
0 & \text{if } \varepsilon_h^{(t)} = 0 \ \text{or}\ \beta_h^{(t)} = 0 \ \text{or}\ \tau_h^{(t)} = 0 \\
\frac{3}{\frac{1}{\varepsilon_h^{(t)}} + \frac{1}{\beta_h^{(t)}} + \frac{1}{\tau_h^{(t)}}} & \text{otherwise}
\end{cases}
\end{equation}

\updated{Structural accuracy \( \varepsilon_h^{(t)} \) measures how faithfully the DT captures the physical system's required structure at time \( t \), defined as the fraction of correctly represented structural features:
\[
\varepsilon_h^{(t)} = \frac{|\mathbb{M}_\varepsilon^{(t)}|}{|\mathbb{N}_\varepsilon|}.
\]
Here, \( \mathbb{N}_\varepsilon \) is the set of structural DT features required by the application (e.g., geometry primitives, key mesh points, textures, calibrated dimensions, and pose anchors), and \( \mathbb{M}_\varepsilon^{(t)} \subseteq \mathbb{N}_\varepsilon \) is the subset whose DT representation matches the physical reference within an application-defined tolerance (e.g., geometric error \( \leq \epsilon_{\text{geo}} \) and texture similarity \( \geq \epsilon_{\text{tex}} \)).

Behavioral accuracy \( \beta_h^{(t)} \) measures how faithfully the DT reproduces required interaction/behavioral responses, defined as:
\[
\beta_h^{(t)} = \frac{|\mathbb{M}_\beta^{(t)}|}{|\mathbb{N}_\beta|},
\]
where \( \mathbb{N}_\beta \) is the set of behavioral interaction features (e.g., state-transition rules, actuator response modes, contact/haptic response patterns, and dynamic constraints), and \( \mathbb{M}_\beta^{(t)} \subseteq \mathbb{N}_\beta \) is the subset whose simulated response matches the physical system within tolerance (e.g., response delay \( \leq \epsilon_{\text{delay}} \), force/torque error \( \leq \epsilon_{\text{force}} \)).

Temporal accuracy \( \tau_h^{(t)} \) captures DT freshness via an increasing, saturating function of DT update frequency:
\[
\tau_h^{(t)} = \frac{1}{D_h} \sum_{d=1}^{D_h} \left(1 - e^{-\lambda f_d^{(t)}}\right),
\]
where \( f_d^{(t)} \) is the update frequency of DT \( d \), and \( \lambda \) characterizes the accuracy reduction rate (uniform across DTs in a VRoom for simplicity). The sets \( \mathbb{N}_\varepsilon \) and \( \mathbb{N}_\beta \) are specified at VRoom instantiation and can be extended to incorporate domain-specific fidelity metrics.

We use the harmonic mean in \eqref{eqn:dtacc} to reflect that DT fidelity is a \emph{weakest-link} property: if any of \( \varepsilon_h^{(t)}, \beta_h^{(t)}, \tau_h^{(t)} \) is low, overall DT fidelity degrades sharply even when the others are high. The zero case captures a non-functional DT state (no structural/behavioral/temporal alignment), yielding zero DT fidelity contribution to immersion. Finally, in our model \( \beta_h^{(t)} \) is a controllable DT service level: increasing it raises DT compute/network cost (shown later) but improves immersion through \( \hat{A}_h^{(t)} \).}
%



In $\hat{Q}_h \ , \hat{F}_h \ \text{and} \ \hat{A}_h$ normalization is performed via min-max scaling (i.e., $\hat{x} = \frac{x - x_{\min}}{x_{\max} - x_{\min}}$) on the VRooms' minimum and maximum parameters as found in \tref{tab:vrooms_params}.

\updated{Finally, it is important to note that we do not treat VE and DT as independent modules. They are \emph{coupled} through both perception and resources: (a) \emph{perceptual coupling}, because user immersion is computed jointly from VE visual quality/motion and DT fidelity, so degradation in either subsystem reduces the same immersion score; and (b) \emph{resource coupling} (as shown later), because improving VE quality (e.g., higher bitrate $B_h^{(t)}$ and frame rate $f_h^{(t)}$) and improving DT fidelity (e.g., higher behavioral accuracy $\beta_h^{(t)}$ and DT synchronization traffic) both increase compute/network requirements and compete under the same MSP budget and nonlinear scaling constraints. Consequently, allocating more resources to VE rendering can reduce feasible DT fidelity and vice versa, which motivates our joint VE--DT immersion-aware optimization.}

\updated{Finally, we note that the immersion weights ($w_{q},w_{f},w_{a}, w_{\sigma} \  \text{and} \ w_{\eta}$) are VRoom-type parameters that encode application-level priorities rather than optimization variables. They are fixed during training (with values as in \tref{tab:vrooms_params}) and reflect domain semantics (e.g., motion sensitivity in arenas versus visual fidelity in galleries). This modular formulation preserves interpretability and allows heterogeneous immersion profiles without altering the underlying optimization or learning framework.}

\subsubsection{Resource Scalability}
Since each Head's VRoom $v_h$ consists of the VE that needs to be rendered and a set of DTs that needs to be rendered and synchronized, it is then of paramount importance to know how these resources's computational and network requirements will be scaled as more VRCs join the VRoom and/or the visual quality is demanded (i.e., more frame rate is required).

Inspired by recent advancements in distributed systems and resource optimization \cite{czarnul2025optimization,bhaskaran2025comprehensive}, our resource scaling model incorporates non-linear user scaling \cite{czarnul2025optimization}, resource sharing efficiency (i.e., degree of sharing a resource, value between 0-1)\cite{awad2025resource},  user density effects (i.e., how full VRoom affects the resource scaling) \cite{kiggundu2024resource,delgado2024optimal}, aligning with current research in cloud computing and edge environments \cite{KareemAwadZainolAriffinNazriYassen+2025,liang2024resource}.
The proposed resource scaling model for any resource for any VRoom component is presented as:

\begin{align}
\label{eqn:scaling}
{R_{s}^{y}}^{(t)} &= 
\underbrace{\left[{R_{b}^{y}}^{(t)} \cdot \left(\frac{f_h^{(t)}}{f_{v_{\min}}}\right)^{\kappa} \right]}_{\text{Base resource needed}}  \cdot  \notag \\
&\quad
\underbrace{\left[C_h^{(t)\vartheta} \cdot \left(1 - \iota \left( 1 - \left( \frac{C_h^{(t)}}{V_{h_{\max}}} \right)^{\Lambda} \right) \right) \right]}_{\text{Scale of res. accord. to res. and VRoom nature}}
\end{align}
 
Where ${R_{s}^{y}}^{(t)}$ represents the scaled resource of type $R$ at time $t$, (Herein, $R \in \{Comp, Net\}$ represents computational and network resources respectively) for a specific VRoom's component $y$, ($y \in \{VE, DT\}$ represent VE and DT) at a specific type of VRoom. From \eqref{eqn:scaling}, ${R_{s}^{y}}^{(t)}$ is composed of two main terms:

The first term, presented by $[{R_{b}^{y}}^{(t)} \cdot (\frac{f_h^{(t)}}{f_{v_{min}}})^\kappa]$ signifies the base resources of type $R$ required for a VRoom component $y$ with current frame rate relative to the minimum room's frame rate.
Applying ${R_{b}^{y}}^{(t)}$  in our resources and VRoom components, we get 1) base Computational resource requirements for a VE and DT denoted as ${Comp}_{b}^{VE}$ and ${Comp}_{b}^{DT}$ and 2)  base network resources for VE and DT denoted as ${Net}_b^{VE}$ and ${Net}_b^{DT}$.
     
${{Comp}_{b}^{VE}}^{(t)}$ is calculated as: \[{{Comp}_{b}^{VE}}^{(t)} = \frac{P_v^{(t)}}{P_{max}} + \frac{{O_v^{(t)}}}{{O}_{max}} + \frac{{A_v^{(t)}}}{{A}_{max}}\] where $P_v^{(t)}, O_v^{(t)}$ and $A_v^{(t)}$ are the parameters affecting the base resources of VE of VRoom $v$ at $t$, which are polygon count $P_v$, number of physical objects $O_v$ and interaction points, $A_v$ while $P_{max}, O_{max} \ \text{and} \ A_{max} $ are their maximum values fixed across all VRooms $\mathcal{V}$ and available in \tref{tab:vrooms_params}. 
Additionally, the base network resources for VE can be described as the minimum per-VRoom fixed bitrate ${B_v}^{(t)}_{min}$: \[
{{Net}_{b}^{VE}}^{(t)} = {B_v}^{(t)}_{min}
    \]
    
Similarly, building upon DTs' requirements discussed in \cite{dts,dts2}, ${{Comp}_{b}^{DT}}^{(t)}$ can be described as: \[{{Comp}_{b}^{DT}}^{(t)} = \frac{N_d^{(t)}}{N_{max}} + \frac{SV_d^{(t)}}{SV_{max}} + \frac{U_d^{(t)}}{U_{max}}\] where $N_d^{(t)}, SV_d^{(t)} \ \text{and} \ U_d^{(t)}$ represent the factors affecting the complexity of DT $d$, and are the number of sensors, the number of DT's state variables, and DT's synchronization update frequency.  $N_{max}, SV_{max} \ \text{and} \ U_{max}$ are their maximum values, unified across all of the DTs $\mathcal{D}$. 
Moreover, the base network resources for DT can be described as a scaled-down version of VE's network resources multiplied by a scaling factor $\varphi$ and the DT's behavioral accuracy $\beta$, therefore \[
{{Net}_{b}^{DT}}^{(t)} = (\varphi \cdot {B_v}^{(t)}_{min}) \cdot \beta_h^{(t)} 
\]
Further, since all of the aforementioned base resources are affected by the smoothness of the visual quality, we multiply them by a visual quality scaling factor $(\frac{f_h^{(t)}}{f_{v_{min}}})^\kappa$ representing the ratio of head's frame rate $f_h^{(t)}$ ( one of the decision variables, as explained later) divided by the VRoom's minimum frame rate $f_{v_{min}}$, and $\kappa \in \{Comp, Net\}$ is a resource-specific exponent that shows how this resource scales with visual smoothness. 

 The second term of ${R_{s}^{y}}^{(t)}$ calculation $[{C_h}^{(t)\vartheta} \cdot (1 - \iota ( 1 - ( \frac{C_h^{(t)}}{V_{h_{max}}} )^{\Lambda} ) ) ]$ describes the scaling of resources according to 1) the resource nature (e.g., resource scaling and sharing capability) and 2) the VRoom's nature (e.g., VRC count effect on VRoom's capacity). Herein, $\vartheta$ controls the non-linear scaling behavior as VRC count ($C_h$) increases, similar to scaling effects observed in distributed systems, $\iota$ represents the resource-specific sharing efficiency factor, capturing how effectively resources can be shared as users increase, comparable to resource optimization approaches in cloud-based rendering systems and $\Lambda$ models the impact of user density relative to VRoom's maximum capacity (${V_{h_{max}}}$), reflecting congestion effects.

\subsubsection{Cost}
In the previous section, we explained how the different network and computational resources scale for the VE and DT. In this section, we will utilize the scaled resources across the different Heads belonging to each MSP to calculate the overall resource allocation cost. 

The total computational cost of an MSP $m$ across all of its heads can be computed as:

\[ \smallmath[0.85]{{{Cost}_{\text{Comp}}^{m}}^{(t)} = \sum_{h \in \mathcal{H}_m} K_{Comp} \cdot \left( Comp_s^{VE}(m, h)^{(t)} + Comp_s^{DT}(m, h)^{(t)} \right)} \]
Moreover, the total network cost is computed as:
\[\smallmath[0.9]{{{Cost}_{\text{Net}}^{m}}^{(t)} =\sum_{h \in \mathcal{H}_m} K_{Net} \cdot \left( {Net_s^{VE}(m, h)}^{(t)} + {Net_s^{DT}(m, h)}^{(t)} \right)}\]
where $K_{Comp}$ and $K_{Net}$ are the unit prices for computational and network resources, respectively.

Therefore, the total cost for an MSP $m$ evaluated at any arbitrary timestep $t$, denoted as ${Cost_{Total}^{m}}^{(t)}$ can be presented as the summation of computational and network resources as:
 \begin{equation}
 {Cost_{Total}^{m}}^{(t)} = {{Cost}_{\text{Comp}}^{m}}^{(t)}  + {{Cost}_{\text{Net}}^{m}}^{(t)} 
 \end{equation}  
After introducing the system model, the next section will discuss the problem formulation for both the non-cooperative and cooperative scenarios.

\begin{table}[ht]
    \centering \updated{
    \caption{Parameters for different VRooms}
    \label{tab:vrooms_params}
    \begin{tabularx}{\linewidth}{lccc}
        \toprule
        \textbf{Parameter} & \textbf{LIBRARY} & \textbf{ARENA} & \textbf{GALLERY} \\
        \midrule
        $B_{v_{min}}$, $B_{v_{max}}$ & 20, 25 & 30, 50 & 25, 35 \\
        $f_{v_{min}}$, $f_{v_{max}}$ & 30, 60 & 60, 120 & 30, 60 \\
        $\varepsilon_{v_{min}}$, $\varepsilon_{v_{max}}$ & 0.6, 1.0 & 0.5, 1.0 & 0.8, 1.0 \\
        $\beta_{v_{min}}$, $\beta_{v_{max}}$ & 0.5, 1.0 & 0.5, 1.0 & 0.3, 1.0 \\
        $\omega_h$, $\kappa_{Comp}$, $\kappa_{Net}$ & 400, 1.1, 0.5 & 720, 1.1, 0.5 & 400, 1.1, 0.5 \\
        $w_{\sigma}$, $w_{\eta}$ & 0.5, 0.5 & 0.3, 0.7 & 0.7, 0.3 \\
        $\vartheta_{Comp}$, $\vartheta_{Net}$ & 0.7, 0.9 & 0.85, 0.85 & 0.7, 0.8 \\
        $\Lambda$, $\iota$ & 0.8, 0.6 & 0.8, 0.5 & 0.7, 0.8 \\
        $P_v$, $O_v$, $A_v$ & $2 e5$, 50, 5 & $5 e5$, 200, 30 & $3 e5$, 20, 15 \\
        $N_d$, $SV_d$, $U_d$ & 50, 30, 2 & 200, 100, 10 & 30, 20, 1 \\
        $v_{h_{max}}$ & 10 & 100 & 10 \\
        $w_{q},w_{f},w_{a}$ & 0.33 & 0.33  & 0.33 \\
        $r_{arr},r_{dep}$ & 0.4, 0.7 & 0.4, 0.7 & 0.4, 0.7\\
        \bottomrule
    \end{tabularx}
    }
\end{table}

\begin{table}[h]
\centering
\updated{
\caption{General parameter values}
\begin{tabularx}{\linewidth}{lX}
\toprule
\textbf{Parameter} & \textbf{Value} \\
\midrule
$N_{\text{max}}, SV_{\text{max}}, U_{\text{max}}$ & 100, 100, 10 \\ 
$e_0, \ldots, e_4$ \cite{vmafssim} & 0.65, 0.368, $1.23 \times 10^{-3}$, 0.85, $1.23 \times 10^{-3}$ \\ 
$j_0, \ldots, j_4$ \cite{vmafssim} & 36.13, $-1.66 \times 10^{-2}$, 11.62, $-6.07 \times 10^{-3}$ \\ 
$K_{\text{Comp}}, K_{\text{Net}}$ & 0.01, 0.01 \\ 
$P_{\text{max}}, O_{\text{max}}, A_{\text{max}}$ & $1 \times 10^5$, 100, 10 \\ 
$\varphi, \tau_h^{(t)}$ & 0.1, 1 \\ 
$\mathcal{V}$ used sequentially & Lib., Arena ,Gal., Lib, Gal., Arena, Gal., Lib., Lib.\\
PPO Act/cri. NN & 2 layers of 128 neurons\\
PPO gamma & 0.97\\
PPO batch size & 128\\
PPO learning rate & $3\times 10^{-4}$\\
\end{tabularx}
\label{tab:generalparams}
}
\end{table}

    

\section{PROBLEM FORMULATION}
\label{sec:probform}
{In this section, we formalize the two optimization problems underlying CIVIC. We begin with a non-cooperative benchmark, in which each MSP operates using only its own budget, and then extend it to a cooperative formulation coupled through a shared General Credit Pool (GCP). To keep the notation compact, for every Head $h$ and timestep $t$, we define the local control vector}
{
\[
\mathbf{u}_h^{(t)} \overset{\mathrm{def}}{=} \bigl(B_h^{(t)}, f_h^{(t)}, \beta_h^{(t)}\bigr),
\]
}
{where $B_h^{(t)}$ is the allocated bitrate, $f_h^{(t)}$ is the frame rate, and $\beta_h^{(t)}$ is the DT behavioral-accuracy level selected from a finite fidelity grid. For each Head $h$, let $\mathcal{B}_{v_h}$ denote the admissible behavioral-fidelity grid associated with its VRoom type; throughout this paper, we use a uniform spacing $\Delta_\beta=0.05$ between consecutive levels. We also let $Budget_m^{(t)}$ denote the remaining budget available to MSP $m$ at the \emph{beginning} of slot $t$.}
\subsection{Problem $\boldsymbol{P}$: Non-cooperative setting}

The non-cooperative problem serves as the baseline case in which MSPs act independently over the horizon $\mathcal{T}=\{1,\ldots,T\}$. At each slot, MSP $m$ observes the set of Heads requesting service and chooses $\mathbf{u}_h^{(t)}$ for every served Head $h\in\mathcal{H}_m$. Each Head request specifies its VRoom type, the number of VRCs, the number of DTs within that VRoom and the required immersion level. A Head-level request is considered satisfied when its attained immersion meets or exceeds the common threshold $I_{\text{threshold}}$.
Moreover, each MSP operates under a fixed budget, and serving requests consumes budget over time; hence, MSPs must tune their allocation to control spending while maximizing the number of satisfied requests across $\mathcal{T}$.
To avoid starvation among Heads attached to the same MSP, we adopt an \emph{all-or-nothing} MSP-level service rule: a request at time $t$ is considered fulfilled for MSP $m$ only when \emph{all} of its active Heads are simultaneously are served at or above threshold. The initial condition is $Budget_m^{(1)}=Budget_m^{(0)}$ for every MSP $m$.

{To translate the above service rule into a compact mathematical form, we first introduce a Head-level satisfaction indicator and then use it to express the MSP-level objective. Specifically, let}
{
\[
z_h^{(t)} \overset{\mathrm{def}}{=} \mathbb{I}_{\text{request}}(h,t)\,\mathbf{1}\!\left[I_h^{(t)}\!\left(\mathbf{u}_h^{(t)}\right)\ge I_{\text{threshold}}\right].
\]
}
{Therefore, the non-cooperative optimization problem can be written as follows:}
{
\begin{align}
\boldsymbol{P}: \quad \max_{\{\mathbf{u}_h^{(t)}\}} \quad & \sum_{t=1}^{T} \sum_{m=1}^{M} L_m^{(t)} \label{eq:objective}
\end{align}
}

{\textbf{Subject to:}}
{
\begin{align}
& \mathcal{H}_{m,\text{active}}^{(t)} = \{h \in \mathcal{H}_m : \mathbb{I}_{\text{request}}(h,t) = 1\},
\quad \forall m \in \mathcal{M}, \forall t \in \mathcal{T} \tag{c1} \label{eq:active_heads} \\
& \mathbb{I}_{\text{request}}(h,t) \in \{0,1\},
\quad \forall h \in \mathcal{H}, \forall t \in \mathcal{T} \tag{c2}\label{eq:request_indicator} \\
& L_m^{(t)} = \mathbf{1}\!\left[|\mathcal{H}_{m,\text{active}}^{(t)}|>0\right]
\mathbf{1}\!\left[\sum_{h \in \mathcal{H}_{m,\text{active}}^{(t)}} z_h^{(t)} = |\mathcal{H}_{m,\text{active}}^{(t)}|\right], \nonumber\\
& \hspace{6.4cm} \forall m \in \mathcal{M}, \forall t \in \mathcal{T} \tag{c3}\label{eq:L_m_definition} \\
& {Cost_{Total}^{m}}^{(t)} \leq {Budget_{m}^{(t)}},
\quad \forall m \in \mathcal{M}, \forall t \in \mathcal{T}\tag{c4}\label{eq:budget_constraint} \\
& B_h^{(t)} \in \mathbb{Z}, \ B_{v_{\min}} \leq B_h^{(t)} \leq B_{v_{\max}},
\quad \forall h \in \mathcal{H}, \forall t \in \mathcal{T}\tag{c5} \label{eq:bitrate_bounds}\\ 
& f_h^{(t)} \in \mathbb{Z}, \ f_{v_{\min}} \leq f_h^{(t)} \leq f_{v_{\max}},
\quad \forall h \in \mathcal{H}, \forall t \in \mathcal{T}\tag{c6} \label{eq:framerate_bounds}\\
& \beta_h^{(t)} \in \mathcal{B}_{v_h},
\quad \forall h \in \mathcal{H}, \forall t \in \mathcal{T} \tag{c7}\label{eq:accuracy_bounds}\\ 
& Budget_m^{(t+1)} = Budget_m^{(t)} - {Cost_{Total}^{m}}^{(t)}, \nonumber\\
& \hspace{5cm} \forall m \in \mathcal{M}, \forall t \in \mathcal{T}\setminus\{T\}\tag{c8}\label{eq:budget_dynamics_noncoop}
\end{align}
}

{The constraints can be read naturally from request activation to budget evolution. Constraint \eqref{eq:active_heads} defines the active request set of MSP $m$ at slot $t$, while \eqref{eq:request_indicator} enforces the binary nature of the request indicator itself. Constraint \eqref{eq:L_m_definition} then captures the key structural choice in CIVIC: the objective counts an MSP as successful at slot $t$ only if every active Head attached to it is simultaneously satisfied, which creates coupling across the Heads served by the same MSP. Constraint \eqref{eq:budget_constraint} next enforces local affordability by requiring the total cost incurred by MSP $m$ at time $t$ to remain within its currently available budget. Constraints \eqref{eq:bitrate_bounds} and \eqref{eq:framerate_bounds} bound the VE control variables within the admissible VRoom-specific service ranges, whereas \eqref{eq:accuracy_bounds} constrains the DT behavioral-fidelity control to a finite grid with step size $\Delta_\beta=0.05$. Finally, constraint \eqref{eq:budget_dynamics_noncoop} makes the inter-temporal coupling explicit by stating that expenditure at slot $t$ reduces the remaining budget available at slot $t{+}1$.}

{For tractability, we do not optimize every DT-related control variable. In particular, although an ideal allocation policy could also optimize the DT update frequency $\tau_h^{(t)}$ through $\lambda f_d$ and the structural-accuracy term $\varepsilon_h^{(t)}$, we assume these quantities are fixed in advance through service-level agreements between Heads and MSPs and therefore treat them as exogenous parameters.}

\subsection{Problem $\boldsymbol{\bar{P}}$: Cooperative setting}
While Problem $\boldsymbol{P}$ treated MSPs as isolated entities, which by design limits the collective potential of MSPs to cooperate and fulfill more requests particularly when MSPs operate with varying initial budgets and face fluctuating dynamic loads across their service areas. In that case, some MSPs might experience client surges and fail to meet demands, while others with minimal loads refrain from offering assistance. Hence, we now extend our problem $\boldsymbol{P}$ to a cooperative setting $\boldsymbol{\bar{P}}$ in which MSPs belong to the same administrative domain and can redistribute unused credits through a General Credit Pool (GCP). 

The GCP is a centralized resource-sharing system that enables MSPs to dynamically redistribute computing capacity through a deposit-withdraw framework. Using General Credits (GCs) as standardized units, each equivalent to a monetary value (e.g., 1 GC = 1 USD). MSPs supports their peers by purchasing GCs on behalf of other MSPs and depositing these GCs into the pool, which resource-constrained MSPs can later withdraw to acquire general computational resources (e.g., 1 GC = 2 CPU cores for 1 hour). This mechanism eliminates the complexities of direct credit transfers while allowing MSPs serving high-demand VRooms to secure additional capacity seamlessly. By supporting prepurchasing GCs ahead of time, the GCP also reduces on-demand costs and allocation latency.
The GCP supports different credit valuation mechanisms depending on the administrative relationship among MSPs. In homogeneous cooperation, all MSPs operate under a single organization (e.g., a large cloud provider managing multiple regional MSPs). In this case, General Credits (GCs) act purely as internal accounting units: if MSP A deposits 1 GC into the GCP, MSP B, belonging to the same organization, can later withdraw exactly 1 GC with no loss of value, since coordination costs and financial reconciliation are internalized by the same authority.

In heterogeneous cooperation, MSPs belong to different companies (e.g., MSP A from Company X and MSP B from Company Y) that voluntarily form a coalition to share resources during demand fluctuations. In this setting, GCs represent inter-organizational resource obligations rather than internal credits. Because resource contributions and withdrawals typically occur at different times and often under asymmetric demand conditions, the effective value of a GC may vary. For example, if MSP A contributes 1 GC during a low-demand period, MSP B may later withdraw resources during a peak-demand period, in which case the withdrawal may correspond to less than 1 GC of effective capacity (e.g., 0.95 GC), reflecting coordination overhead, delayed reciprocity, and scarcity-driven valuation. In $\boldsymbol{\bar{P}}$, we focus exclusively on the homogeneous cooperation model to isolate the impact of cooperative resource sharing without introducing inter-organizational pricing dynamics. Extending the GCP to fully dynamic, demand-aware credit pricing across heterogeneous MSP coalitions is a natural direction for future work.


{The main modeling choice in CIVIC for GCP donations is \emph{surplus-only donation}. Each MSP first determines its local VE/DT allocation and only afterward donates a fraction of the \emph{remaining} local surplus to the GCP. This ordering separates the local immersion-control decision from the inter-MSP coupling and yields a clearer pool-evolution model. The local initial condition remains $Budget_m^{(1)}=Budget_m^{(0)}$, while the pool starts empty, i.e., $GCP^{(1)}=0$. To express this mechanism compactly, we next define the local surplus and deficit quantities. 
For compactness, let $[x]_+ \overset{\mathrm{def}}{=} \max\{0,x\}$ and define}
{
\[
\begin{aligned}
S_m^{(t)} &\overset{\mathrm{def}}{=} \bigl[Budget_m^{(t)} - {Cost_{Total}^{m}}^{(t)}\bigr]_+, \\
W_m^{(t)} &\overset{\mathrm{def}}{=} \bigl[{Cost_{Total}^{m}}^{(t)} - Budget_m^{(t)}\bigr]_+.
\end{aligned}
\]
}
{where $S_m^{(t)}$ denotes the post-allocation local surplus of MSP $m$ and $W_m^{(t)}$ denotes its deficit before GCP support.}
{Using these auxiliary quantities, the cooperative optimization problem can be written compactly as follows:}
{
\begin{align}
\boldsymbol{\bar{P}}: \quad \max_{\{\mathbf{u}_h^{(t)}\}, \{\delta_m^{(t)}\}} \quad & \sum_{t=1}^{T} \sum_{m=1}^{M} L_m^{(t)} \label{eq:objective2}
\end{align}
}

{This formulation preserves the non-cooperative constraints of Problem $\boldsymbol{P}$ and augments them with the following cooperative constraints:}
\allowdisplaybreaks[4]
{
\begin{align*}
& 0 \leq \delta_m^{(t)} \leq 1,
\quad \forall m \in \mathcal{M}, \forall t \in \mathcal{T} \tag{c9}\label{eq:donation_bounds} \\
& D_m^{(t)} = \delta_m^{(t)} S_m^{(t)},
\quad \forall m \in \mathcal{M}, \forall t \in \mathcal{T} \tag{c10}\label{eq:donation_amount} \\
& \sum_{m=1}^{M} W_m^{(t)} \leq GCP^{(t)},
\quad \forall t \in \mathcal{T} \tag{c11}\label{eq:pool_capacity_constraint} \\
& 0 \leq GCP^{(t)} \leq \sum_{m=1}^{M} Budget_m^{(0)}, \quad \forall t \in \mathcal{T},
\qquad GCP^{(1)} = 0 \tag{c12}\label{eq:pool_non_negative_max} \\
& Budget_m^{(t+1)} = S_m^{(t)} - D_m^{(t)},
\quad \forall m \in \mathcal{M}, \forall t \in \mathcal{T}\setminus\{T\} \tag{c13}\label{eq:budget_dynamics_coop}\\
& GCP^{(t+1)} = GCP^{(t)} + \sum_{m=1}^{M} D_m^{(t)} - \sum_{m=1}^{M} W_m^{(t)},
\quad \forall t \in \mathcal{T}\setminus\{T\}
\tag{c14}\label{eq:pool_dynamics}
\end{align*}
}

{The cooperative constraints can likewise be read in sequence. Constraint \eqref{eq:donation_bounds} restricts the donation decision to a valid fraction in $[0,1]$. Constraint \eqref{eq:donation_amount} then converts that fraction into an actual donation amount by applying it only to the post-allocation local surplus $S_m^{(t)}$, so MSPs donate only after satisfying their own local spending decision. Constraint \eqref{eq:pool_capacity_constraint} enforces the core pool-feasibility rule, namely that the aggregate deficit requested from the GCP at any slot cannot exceed the credits currently available in the pool. Constraint \eqref{eq:pool_non_negative_max} bounds the GCP state and specifies the initial condition $GCP^{(1)}=0$. Constraint \eqref{eq:budget_dynamics_coop} then updates the remaining local budget by subtracting the donated amount from the current surplus. Finally, \eqref{eq:pool_dynamics} governs the shared GCP evolution by adding all donations and subtracting all deficits covered at the current slot. This explicit surplus/deficit accounting makes the cooperative coupling mathematically transparent while preserving the all-or-nothing fulfillment rule of $\boldsymbol{P}$.}

{Although credit sharing can substantially improve system-wide performance, suboptimal joint GCP-donation and allocation strategies may still degrade performance below non-cooperative levels, as later shown in Section \ref{sec:expeval}. Relative to Problem $\boldsymbol{P}$, Problem $\boldsymbol{\bar{P}}$ is more complex because of the additional decision variable $\delta_m^{(t)}$ and the inter-temporal coupling introduced by the GCP mechanism. This added complexity further motivates the use of advanced solution methods such as Deep Reinforcement Learning (DRL).}
\subsection{Problems complexity}

Both optimization problems $\boldsymbol{P}$ and $\boldsymbol{\bar{P}}$ are non-convex for two complementary reasons. First, their feasible sets are non-convex because several decision variables, namely $B_h^{(t)}$, $f_h^{(t)}$, and the grid-based $\beta_h^{(t)}$, are restricted to discrete admissible sets. Second, even under the standard continuous relaxation of these discrete controls to their enclosing intervals, the resulting objective/constraint functions remain non-convex. To show the latter point, we inspect the Hessians of representative continuous components. Recall that a twice-differentiable function is convex only if its Hessian matrix is positive semidefinite everywhere in its domain.

Firstly, the SSIM function in \eqref{eq:ssim} contains the term $B_h^{(t)-(e_3 + e_4\omega_h)}$ with negative exponent. The second derivative is:
\begin{align*}
\frac{\partial^2 \sigma_h^{(t)}}{\partial (B_h^{(t)})^2} &= -(e_1 + e_2\omega_h) \cdot (-(e_3 + e_4\omega_h)) \\ & \cdot (-(e_3 + e_4\omega_h) - 1) \cdot B_h^{(t)-(e_3 + e_4\omega_h)-2}
\end{align*}
Since $(e_3 + e_4\omega_h) > 0$ and $(e_1 + e_2\omega_h) > 0$ from the model coefficients, this second derivative is negative, proving non-convexity.

Secondly, the resource scaling function in \eqref{eqn:scaling} mixed partial derivative with respect to $f_h^{(t)}$ and $C_h^{(t)}$ in the Hessian is:
\begin{align*}
\frac{\partial^2 {R_s^y}^{(t)}}{\partial f_h^{(t)} \partial C_h^{(t)}} &= R_b^y \cdot \frac{\kappa}{f_{v_{\min}}^\kappa} \cdot f_h^{(t)\kappa-1} \cdot \vartheta \cdot C_h^{(t)\vartheta-1} \cdot g'(C_h^{(t)}) \\
&\quad + R_b^y \cdot \frac{\kappa}{f_{v_{\min}}^\kappa} \cdot f_h^{(t)\kappa-1} \cdot C_h^{(t)\vartheta} \cdot g''(C_h^{(t)})
\end{align*}
The term $g''(C_h^{(t)}) = -\iota \frac{\Lambda(\Lambda-1)}{V_{h_{\max}}^\Lambda} C_h^{(t)\Lambda-2}$ is negative when $0 < \Lambda < 1$, creating negative eigenvalues in the Hessian matrix and establishing non-convexity.

Thirdly, although $\beta_h^{(t)}$ is selected from a discrete grid in the actual formulation, its continuous relaxation within the interval $[\beta_{v_{\min}},\beta_{v_{\max}}]$ remains non-convex. In particular, the harmonic mean accuracy function in \eqref{eqn:dtacc} has second derivatives of the form:
\begin{align*}
\frac{\partial^2 A_h^{(t)}}{\partial (\beta_h^{(t)})^2} &= \frac{6\beta_h^{(t)-3}}{(\varepsilon_h^{(t)-1} + \beta_h^{(t)-1} + \tau_h^{(t)-1})^3} - \\ & \quad \frac{6\beta_h^{(t)-4}}{(\varepsilon_h^{(t)-1} + \beta_h^{(t)-1} + \tau_h^{(t)-1})^2}
\end{align*}
\updated{which yields negative values for typical parameter ranges, confirming non-convexity through negative Hessian eigenvalues. The mathematical complexity classifies both problems as non-convex mixed-integer nonlinear programming (MINLP) instances, rigorously justifying the DRL approach over traditional convex optimization methods.}

\section{THE PROPOSED DRL-BASED SOLUTIONS}
\label{sec:ourmodel_civic}

\begin{figure*}
    \centering
    \includegraphics[width=0.80\linewidth]{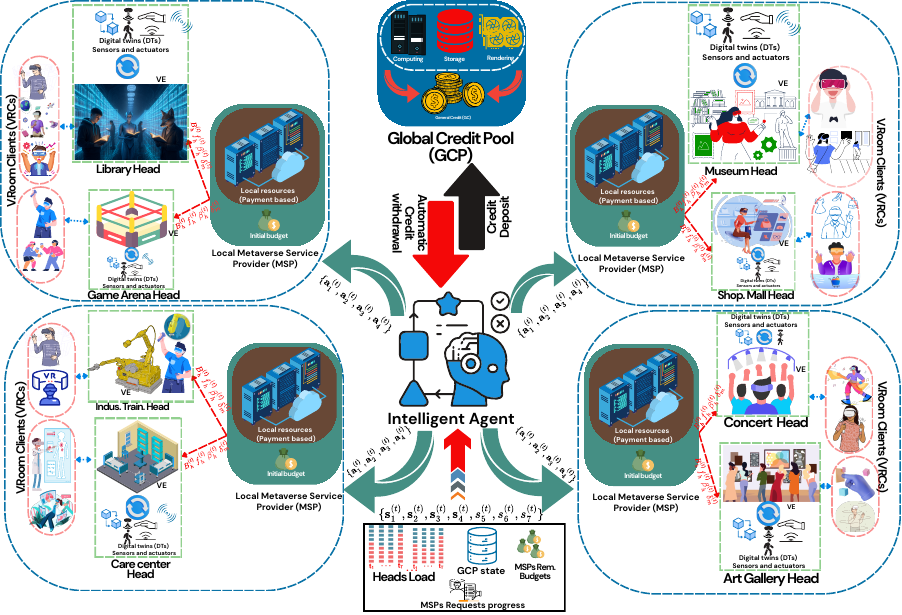}
    \caption{The proposed Cooperative Immersion Via Intelligent Credit-sharing in DRL-Powered Metaverse (CIVIC)}
    \label{fig:oursol}
\end{figure*}

The resource-allocation problems formulated in the previous section are non-convex MINLPs and present significant optimization challenges due to their NP-hard nature (see Appendix), dynamic environment characteristics (e.g., varying VRC counts from one timestep to another), long-term temporal coupling (i.e., allocation at $t$ affects the remaining budget at $t+1$), heterogeneous VRoom requirements, and the need for real-time decision making. The non-cooperative problem (Problem $\boldsymbol{P}$) requires each MSP to jointly optimize mixed decision variables $B_h^{(t)}$, $f_h^{(t)}$, and $\beta_h^{(t)}$ for multiple Heads while adhering to budget and immersion constraints. The cooperative problem (Problem $\boldsymbol{\bar{P}}$) adds a further coupling layer through the shared-pool donation variable $\delta_m^{(t)}$, requiring MSPs to balance local service quality with system-wide support. Traditional optimization solvers such as BARON face difficulty because of the system dynamics, real-time requirements, and the rapid growth in dimensionality as the number of MSPs and Heads increases. 

Therefore, to tackle these challenges, we develop DRL‑based solutions for both $\boldsymbol{P}$ and $\boldsymbol{\bar{P}}$, leveraging a centralized DRL framework capable of learning optimal resource‑allocation and cooperation policies through continuous interaction with the environment. In the following sections, we first introduce the DRL framework and the selected algorithm. Then, in the subsequent subsections, we start by formulating $\boldsymbol{P}$ as a Markov Decision Process (MDP) and evaluating the resulting agent through a comprehensive set of experiments. We then extend this approach to the cooperative setting, formulating $\boldsymbol{\bar{P}}$ as an MDP that not only allocates resources across individual MSPs and their associated Heads but also facilitates inter‑MSP cooperation through the GCP, as illustrated in \fref{fig:oursol}.

\subsection{DRL algorithm}

In contrast to traditional machine and deep learning solutions, which often require building knowledge based on a training dataset to make predictions of values or classes, Reinforcement Learning (RL) and its deep learning-assisted counterpart, DRL, take a different approach and serve a different purpose. DRL is often recognized as an advanced heuristic solution that specializes in solving problems requiring long-term optimization goals by optimally selecting sequential actions under uncertainty. For that to work, DRL algorithms collect millions and even billions of interactions with the simulated environment.
DRL is theoretically grounded in Markov Decision Process (MDP), which is defined as $\langle \mathbb{S}, \mathbb{A}, \mathbb{P}, \mathbb{R}, {\Gamma} \rangle$. In this framework, the agent continuously observes the environment state $ \mathbb {S} $, selects an action $ \mathbb {A} $, and receives a reward $ \mathbb {R} $ discounted by $ \Gamma $, along with the updated state $ \mathbb {S'} $. The transition from $\mathbb{S}$ to $\mathbb{S'}$ occurs probabilistically, as defined by $\mathbb{P}$. After training the agent across multiple episodes, it learns a policy $\pi$, effectively associating states with actions that maximize cumulative rewards. 
Among the different DRL algorithms, we selected Proximal Policy Optimization (PPO) \cite{schulman2017proximal}. PPO offers several advantages that make it our optimal choice. First, it provides stable low-dimensional control outputs that can be projected onto the admissible bitrate, frame-rate, and behavioral-fidelity grids used by our environment. Moreover, it incorporates a clipped surrogate objective that balances exploration and exploitation, preventing destructive policy updates and leading to stable and reliable training. PPO is also highly amenable to parallelization, allowing multiple workers or environments to collect experience simultaneously, which significantly accelerates training and improves sample diversity. Additionally, PPO is recognized for its sample efficiency and ease of implementation, while achieving comparable or superior performance in practice.

In what follows, we will present the formulation of our problem as an MDP by defining the environment state, actions, and rewards.
\subsection{Non-Cooperative}
Starting with our non-cooperative scenario, we propose a centralized agent that tunes the various parameters of all Heads belonging to all MSPs. We begin by introducing the state space, followed by the agent's action and reward, which are used to guide the agent. In our MDP formulation, a vector is presented by a bold letter.
The state space, denoted as $\mathbb{S}$ is described as:
\begin{equation}
\mathbb{S} =\{\textbf{s}_1^{(t)}, \textbf{s}_2^{(t)}, \textbf{s}_3^{(t)}, \textbf{s}_4^{(t)}, s_5^{(t)}\}
\end{equation}
where $\textbf{s}_1^{(t)} =\{ \frac{{Budget_{m}^{(t)}}}{{Budget_{m}^{(0)}}} : \forall m \in \mathcal{M} \}$ conveys the set of remaining budget percentage per each MSP, $
\textbf{s}_2^{(t)} = \{\frac{\sum_{t'=1}^{t'=t} L_m^{(t')}}{|\mathcal{N}_m|} : \forall m \in \mathcal{M}\}$ signifies the percentage of requests fulfilled so far by each MSP and $|\mathcal{N}_m|$ is the count of requests needed to be fulfilled by $m$, 
$\textbf{s}_3^{(t)} = \{\frac{C_h^{(t)}}{V_{h_{max}}} : \forall h \in m, \forall m \in \mathcal{M} \}$ describes the number of clients in each VRoom as a percentage of VRoom's maximum capacity. 
$\textbf{s}_4^{(t)} = \{\mathbb{I}_{\text{request}}(h,t): \forall h \in \mathcal{H}\}$ shows the current active request indicators and finally $s_5^{(t)} = \frac{t}{T}$ indicates a progress till the episode ends.

Similarly, the action space denoted as $\mathbb{A}$, can be described as:  
\begin{equation}  
\mathbb{A} = \{ \textbf{a}_1^{(t)}, \textbf{a}_2^{(t)}, \textbf{a}_3^{(t)} \}  
\end{equation}  
where:  $ \textbf{a}_1^{(t)} = \{ B_h^{(t)} : \forall h \in \mathcal{H}_m, \forall m \in \mathcal{M} \} $ represents the bitrate allocation for each $ h $,  
$ \textbf{a}_2^{(t)} = \{ f_h^{(t)} : \forall h \in \mathcal{H}_m, \forall m \in \mathcal{M} \} $ signifies the frame rate assignments for each $ h $, and $\textbf{a}_3^{(t)} = \{ \beta_h^{(t)} : \forall h \in \mathcal{H}_m, \forall m \in \mathcal{M} \} $ indicates the behavioral-accuracy level selected for the DT in $h$'s VRoom. In implementation, the PPO output corresponding to $\beta_h^{(t)}$ is projected onto the nearest admissible level in the discrete grid $\mathcal{B}_{v_h}$ before being applied to the environment.

Since $\boldsymbol{P}$ focuses on maximizing the number of \textit{fulfilled} requests (i.e., requests with $I_h^{(t)} \ge I_{\text{threshold}}$) under constrained MSP budgets, the reward function at time $t$, denoted by $\mathbb{R}_t$, is designed to reward efficient threshold-meeting allocations while penalizing \textit{failed} requests that arise either from infeasible actions or from budget depletion that prevents service.

Therefore, we formulate $\mathbb{R}_t$ as:
\updated{\begin{align}
\mathbb{R}_t &= 
w_{\text{imm}} \cdot 
\underbrace{\sum_{m \in \mathcal{M}} \sum_{h \in \mathcal{H}_{m,active}^{(t)}} \phi(I_h^{(t)})}_{\text{Immediate reward at ($t \neq T$)}} \notag \ + \\
&\quad  
\underbrace{w_{\text{fin}} \cdot \sum_{t=1}^{T} \sum_{m \in \mathcal{M}} \sum_{h \in \mathcal{H}_{m,active}^{(t)}} \mathbf{1}[{0 < I_h^{(t)} \leq I_{\text{max}}}]}_{\text{Terminal reward at ($t = T$)}}
\label{eq:rt}
\end{align}}

where the first term in \eqref{eq:rt} represents the immediate reward given at anytime ($t \neq T$) and is the allocation efficiency score of all MSPs and is calculated by aggregating the output of  the piece-wise efficiency function at $\phi(.)$ across all of the active Heads Immersions' scores at $t$ as:
\[
\smallmath[0.75]{\phi(I_h^{(t)}) = \begin{cases}
1.5 & \text{if } I_{\text{threshold}} \leq I_h^{(t)} \leq 110\% \cdot I_{\text{threshold}} \\
0.5 - \min\left(0.3, \frac{I_h^{(t)} - I_{\text{threshold}}}{I_{\text{threshold}}}\right) & \text{if } I_h^{(t)} > 110\% \cdot I_{\text{threshold}} \\
-1 & \text{if } I_h^{(t)} < I_{\text{threshold}}\\
\end{cases}}
\]
Here, the $\phi(I_h^{(t)})$ outputs a higher score if the attained immersion is higher than the immersion threshold by $10\%$ at the max, while reducing the efficiency score as this over-allocation increases.
\updated{On the other hand, the second term in \eqref{eq:rt} represents the terminal reward which encourages the agent to gradually improve the number of satisfied requests by first teaching it to do as much requests as possible (under or over immersion allocation) and avoid MSPs failing and is calculated by aggregating the binary function output across all requests done in the whole episode as $\mathbf{1}[{0 < I_h^{(t)} \leq I_{\text{max}}}]$. Both immediate and terminal rewards are multiplied respectively by weights $w_{imm} \ \text{and} \ w_{fin}$, which are chosen experimentally.}

\subsection{Cooperative MSPs}
The cooperative problem $\bar{P}$ introduces significant complexity compared to the non-cooperative setting, requiring careful consideration of the appropriate solution methodology. While distributed multi-agent reinforcement learning approaches such as Centralized Training with Decentralized Execution (CTDE) offer advantages in terms of scalability and agent independence, they are not well-suited for our cooperative scenario due to several key factors.
First, the cooperative setting requires global awareness for optimal decision-making. Each MSP must monitor not only its local state but also the GCP state, peer MSP demands, and resource depletion patterns across the entire system. This global information is essential for MSPs to make informed decisions about resource allocation timing, donation strategies, and withdrawal patterns to optimize their ability to fulfill their total request quotas $\mathcal{N}_m$.
Second, the GCP mechanism necessitates centralized oversight to enforce global constraints. Critical system-wide constraints such as $\sum_{m=1}^{M} W_m^{(t)} \leq GCP^{(t)}$ require global coordination to ensure system feasibility. Additionally, optimizing donation timing across multiple MSPs requires synchronized behavior that is difficult to achieve through independent agents.
Third, our unified administrative control structure eliminates competitive dynamics typically present in multi-agent systems. Since all MSPs operate under the same administrative entity with aligned objectives, the system optimization naturally focuses on the true global optimization problem rather than seeking Nash equilibria among competing agents.
Finally, the manageable scale of our system (3-7 MSPs) makes centralized control computationally feasible while ensuring optimal coordination. The central administrative entity that manages MSPs can deploy and operate the centralized DRL model, enabling comprehensive control over resource orchestration and strategic cooperation.

Building upon the non-cooperative state space $\mathbb{S}$, the cooperative environment requires additional state information to capture the dynamics of the GCP and system coordination status. The extended state space $\bar{\mathbb{S}}$ incorporates two critical components: $s_6^{(t)} $, which indicates the available budget at time $t$ for GCP (i.e., $s_6^{(t)}=GCP^{(t)}$), and $s_7^{(t)}$, a stuck counter variable used to determine whether the system has entered a halting state (i.e., a state where no MSP needs to complete additional requests because all requests are finished, or the budget has been exhausted, and the GCP budget is depleted or insufficient to support any of the MSPs).

Thus, the updated state is defined as:

\begin{equation}
\bar{\mathbb{S}} =\{\textbf{s}_1^{(t)}, \textbf{s}_2^{(t)}, \textbf{s}_3^{(t)}, \textbf{s}_4^{(t)}, s_5^{(t)},s_6^{(t)}, s_7^{(t)}\}
\end{equation}

Moreover, the action space adds another action vector $\textbf{a}_4^{(t)}$, where $ \textbf{a}_4^{(t)} = \{ \delta_m^{(t)} :  \forall m \in \mathcal{M} \}$ is the donation fraction applied to each MSP's post-allocation surplus at time $t$.
Therefore 
\begin{equation}
\bar{\mathbb{A}} = \{ \textbf{a}_1^{(t)}, \textbf{a}_2^{(t)}, \textbf{a}_3^{(t)}, \textbf{a}_4^{(t)}\}
\end{equation}

\updated{We note that the choice of the donation vector $\textbf{a}_4^{(t)}$ directly affects the residual local budget through \eqref{eq:budget_dynamics_coop} and the shared cooperative state through \eqref{eq:pool_dynamics}. Hence, donation decisions influence future request feasibility even when they do not immediately alter the current slot's immersion values.}
The reward function remained the same, encouraging the satisfaction of more requests.

\begin{figure*}
    \centering
    \includegraphics[width=\textwidth]{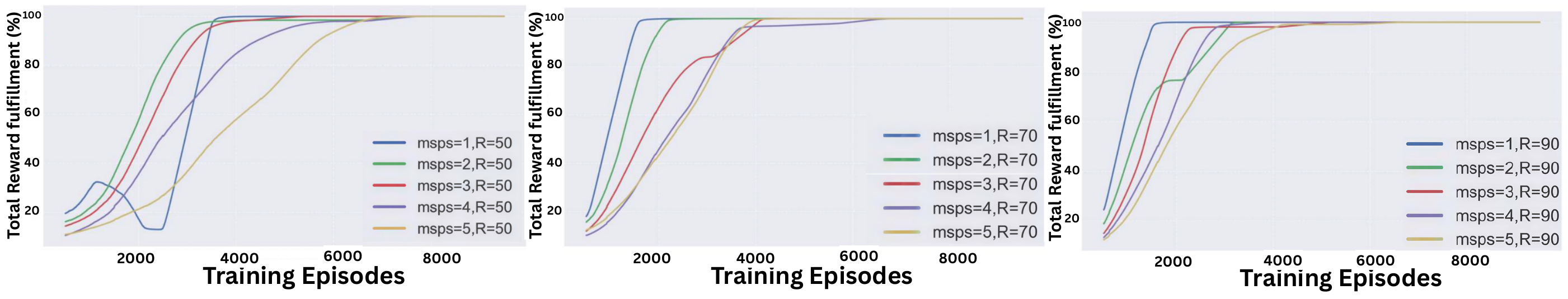}  
    \caption{\updated{Reward Convergence for non-GCP (non-cooperative) DRL solution, while increasing number of requests and number of MSPs}}
    \label{fig:non-cooperative_conv}
\end{figure*}

\begin{figure*}
    \centering
    \includegraphics[width=\linewidth]{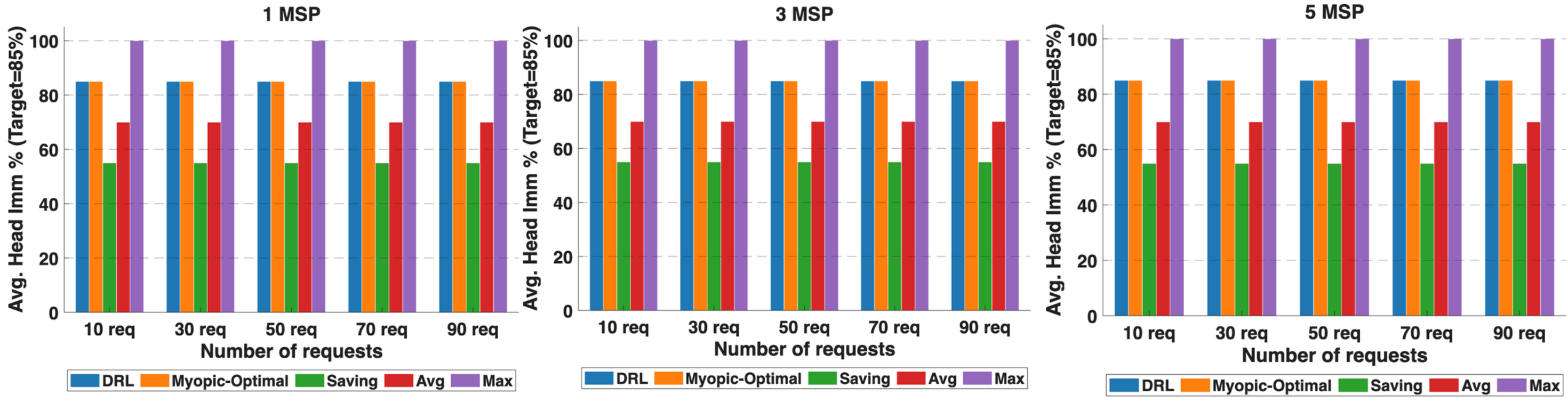} 
    \caption{\updated{Comparing our non-GCP (non-cooperative) DRL-based solution to other Baselines}}
    \label{fig:compsubfig4}
\end{figure*}

\begin{figure}[ht!]
    \centering
    \begin{subfigure}[b]{0.4\textwidth}
        \centering
        \includegraphics[width=\textwidth]{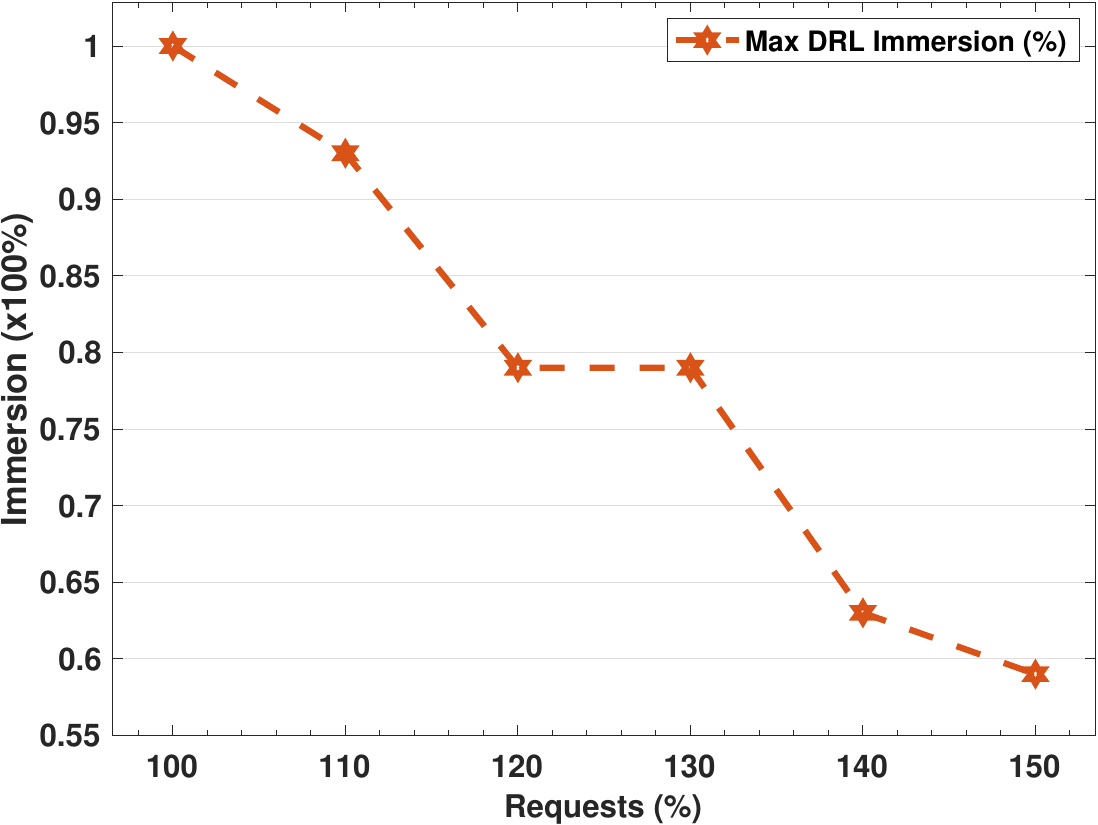} 
        \caption{\updated{Fixing budget while increasing number of requests}}
        \label{fig:exPsubfig1}
    \end{subfigure}
    \begin{subfigure}[b]{0.4\textwidth}
        \centering
        \includegraphics[width=\textwidth]{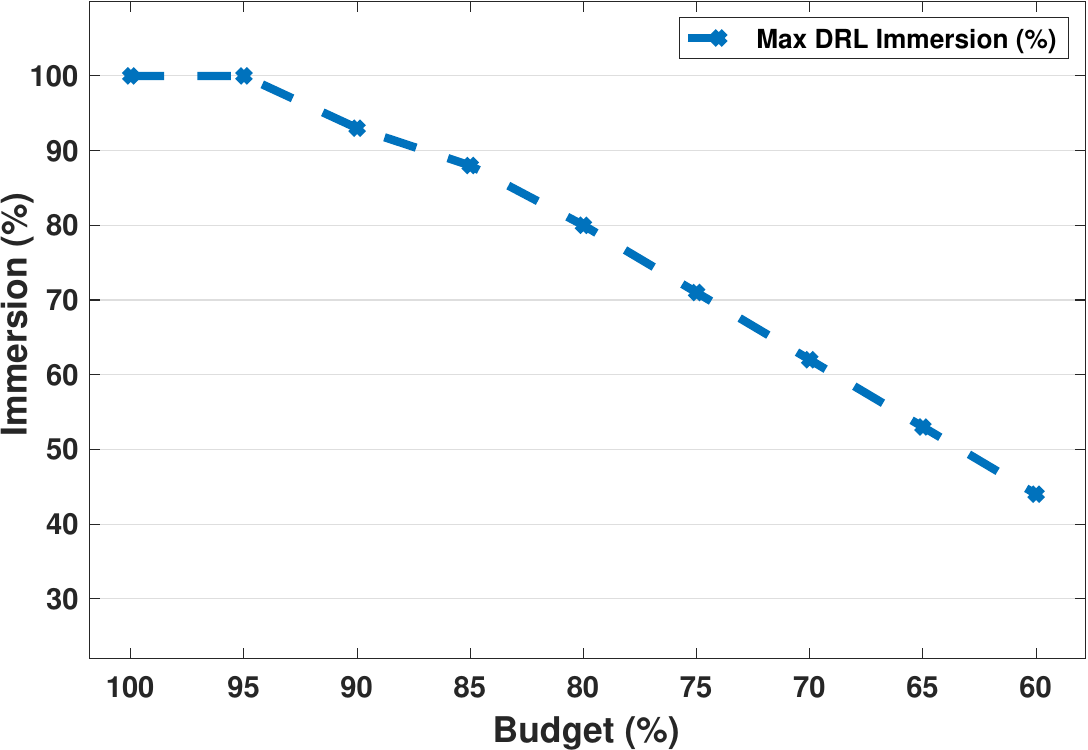} 
        \caption{\updated{Fixing Number of requests while reducing the budget}}
        \label{fig:exp2subfig2}
    \end{subfigure}
    \caption{\updated{Assessing the DRL-based Solution tolerance}}
    \label{fig:non-cooperative-asse}
\end{figure}

\section{PERFORMANCE EVALUATION \& DISCUSSION}
\label{sec:expeval}
In this section, we present a comprehensive evaluation of our proposed solutions for both $\boldsymbol{P}$ and $\boldsymbol{\bar{P}}$. We begin by describing our experimental setup, followed by a detailed analysis of the performance characteristics and comparative evaluation against baseline approaches.

Our evaluation environment for the non-cooperative setting ($\boldsymbol{P}$) comprises multiple MSPs serving different Heads, each hosting various VRoom types with heterogeneous resource requirements. We consider an ideal range of 1-5 MSPs, where each MSP manages one Head with VRooms types as specified in \tref{tab:generalparams}. Each MSP is tasked with fulfilling a predetermined number of fixed requests, unified across all MSPs ($\mathcal{N}_m \in \{50,70,90\}$). Additionally, the immersion threshold, $I_{\text{threshold}}$, was unified across all MSPs for consistency across experiments. Furthermore, MSPs' budgets are equally distributed between MSPs to simulate independent operation without collaboration, and the number of VRCs in VRooms was fixed across timesteps.
On the other hand, the cooperative setting ($\boldsymbol{\bar{P}}$) implements more challenging and realistic operational conditions, reflecting collaborative scenarios with a higher number of MSPs (i.e., 3-7 MSPs) and dynamic requests.
Moreover, in this cooperative environment, a more challenging and unbalanced budget scenario is used to simulate realistic operational conditions, where MSPs must collaborate under resource constraints. 

\updated{Furthermore, to model dynamic user participation, we initialize each VRoom with a non-zero number of active VRCs and subsequently employ a double-Poisson arrival–departure process. At each timestep, the number of arriving and departing VRCs for each VRoom is independently drawn from Poisson distributions with rates $r_{arr}=0.4$ and $r_{dep}=0.7$ respectively. This results in a time-varying room occupancy that captures natural user movement across VRooms and enables the evaluation of cooperative resource management under dynamic and fluctuating demand conditions. Other traffic generation models (e.g., bursty, time-correlated, or event-driven processes), as well as real-world datasets capturing attendance dynamics of online events such as virtual concerts or live sports streams, can be readily incorporated without modifying the proposed optimization or learning framework.}

\subsection{Non-cooperative setting ($\boldsymbol{P}$)}
After discussing the experimental setup, we assess the performance and efficiency of our proposed DRL solution for problem $\boldsymbol{P}$.

Initially, we analyzed the learning behavior and convergence characteristics of the DRL agent trained with a unified minimum immersion threshold of 85\%, as presented in \fref{fig:non-cooperative_conv}. The same figure illustrates the total reward convergence across all MSPs, with convergence curves indicating an increase in difficulty as both the number of MSPs and requests increase. 
Following the agent's acquisition of the optimal policy, we conducted three experiments to benchmark its performance.

Firstly, we compare our DRL-based solution against several baseline approaches. Our baseline comparisons includes multiple resource allocation policies such as: \textit{Saving}, which demonstrate a cost-conserving policy that allocates minimal resources to maximize the number of fulfilled requests, the \textit{Average}, which consistently selects average resource allocations for all requests, the \textit{Max} policy, which invariably assigns maximum resources to each request. \updated{The \textit{Myopic-Optimal} solution employs a computationally expensive greedy optimization approach that makes locally optimal decisions at each timestep. Specifically, at each time $t$, it:
(1) Evaluates all feasible resource allocations ($B_h^{(t)}$, $f_h^{(t)}$, $\beta_h^{(t)}$) that satisfy constraints \eqref{eq:active_heads}--\eqref{eq:budget_dynamics_noncoop} for all Heads,
(2) Selects the allocation that achieves the immersion threshold $I_h^{(t)} \geq I_{\text{threshold}}$,
(3) Among solutions that meet the threshold, choose the one with the minimum cost to save budget.}
The results in Figure \ref{fig:compsubfig4} present the immersion levels achieved by each solution across configurations of MSPs with request volumes. The analysis reveals that our DRL solution consistently tracks the optimal allocation policy across all experimental conditions. In contrast, the saving policy under-provisions resources, yielding suboptimal immersion levels, while the average policy shows marginal improvement over the saving approach but still falls short of adequate provisioning. Conversely, the max policy over-provisions resources, resulting in unnecessarily high immersion levels that would adversely impact MSPs' budgets in the long term.
\begin{figure}
    \centering
    \begin{subfigure}[b]{0.45\textwidth}
        \centering
        \includegraphics[width=\textwidth]{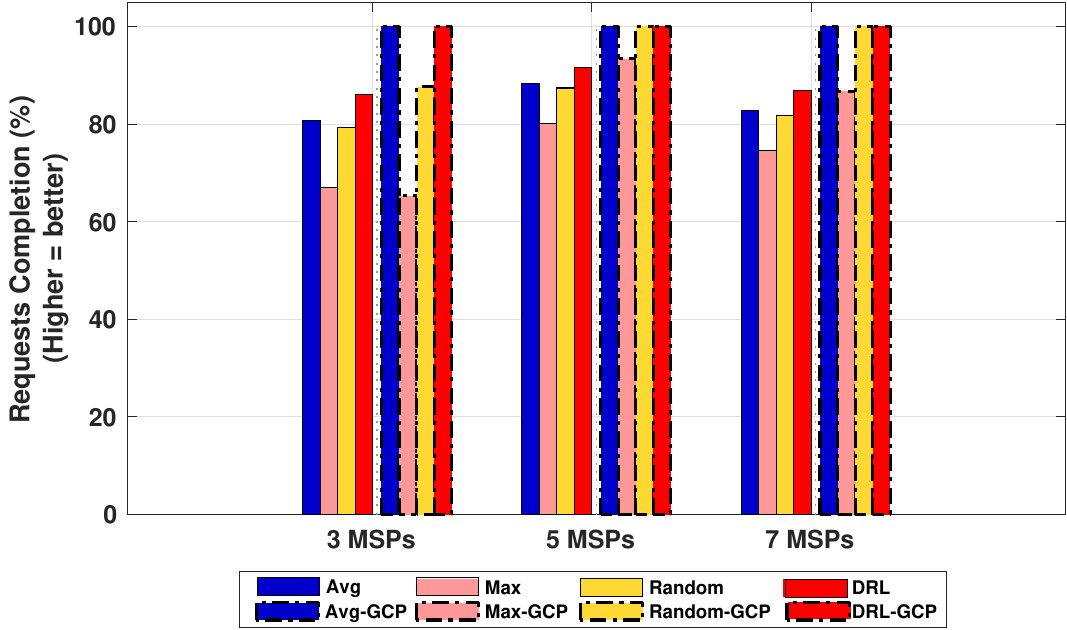} 
        \caption{100 Timesteps}
    \end{subfigure}
     \begin{subfigure}[b]{0.45\textwidth}
        \centering
        \includegraphics[width=\textwidth]{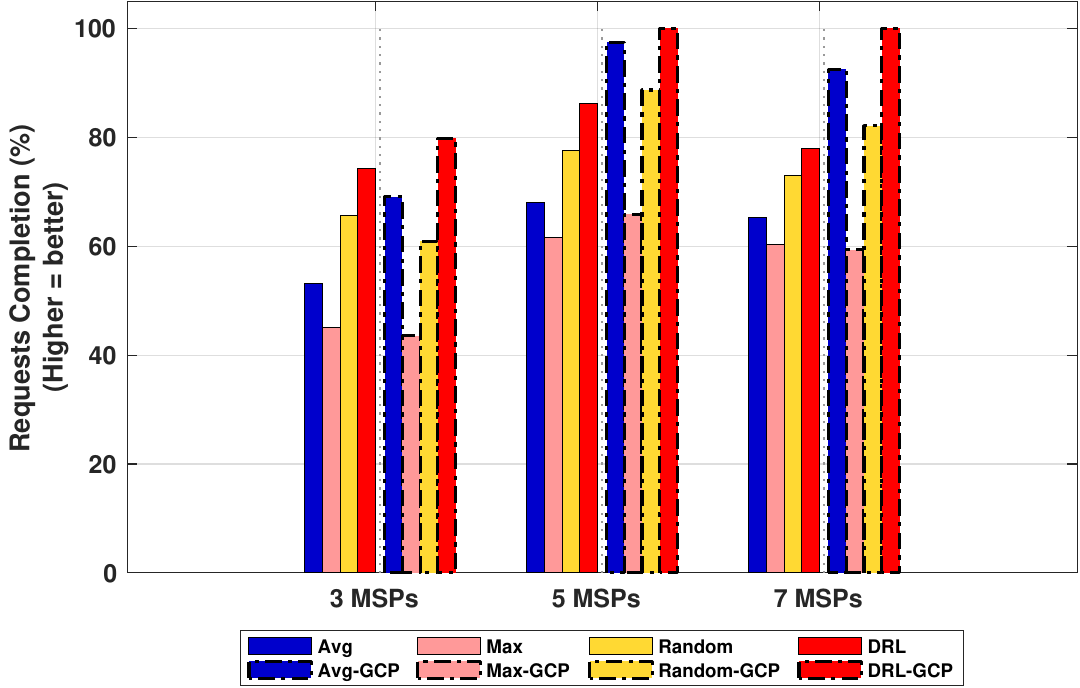} 
        \caption{150 Timesteps }
    \end{subfigure}
    \begin{subfigure}[b]{0.45\textwidth}
        \centering
        \includegraphics[width=\textwidth]{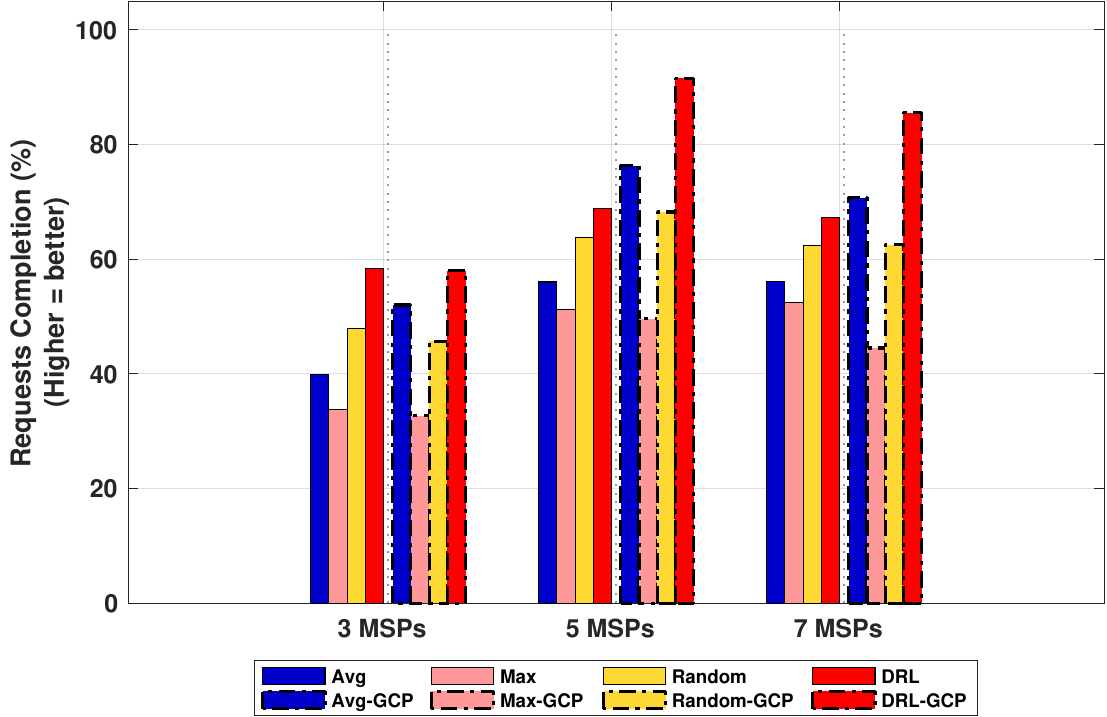}
        \caption{200 Timesteps}
    \end{subfigure}
    \caption{Assessments of the Cooperative Control of the MSPs}
    \label{fig:requests-completion}
\end{figure}

Secondly, \fref{fig:non-cooperative-asse} presents our second and third experiments. In \fref{fig:exPsubfig1} we present our second experiment, which evaluates the system's ability to manage resource allocation as the number of requests increases under a fixed budget. Initially, the agent was trained to provide 100\% immersion for all requests, and we systematically increased the request volume by 10\% in each subsequent trial. The results show that our agent demonstrates adaptive behavior, successfully handling up to 150\% of the original request load while maintaining an average immersion level of 60\%. This indicates that the system can gracefully manage increased demand by reducing service quality rather than failing. In \fref{fig:exp2subfig2}, we present our third experiment, where we assess the system's resilience in accepting all given requests with the highest possible immersion level while reducing the budget in each trial, simulating a poor initial budget allocated to the MSPs. In that experiment, the system was initially trained to achieve 100\% of immersion for all of the requests, and the total price was recorded before being progressively reduced by 5\%. The results show that the agent was able to continue serving all of the requests, but with a sublinear drop in immersion value.

\begin{figure*}
    \centering
    \begin{subfigure}[b]{0.32\textwidth}
        \centering
        \includegraphics[width=\textwidth]{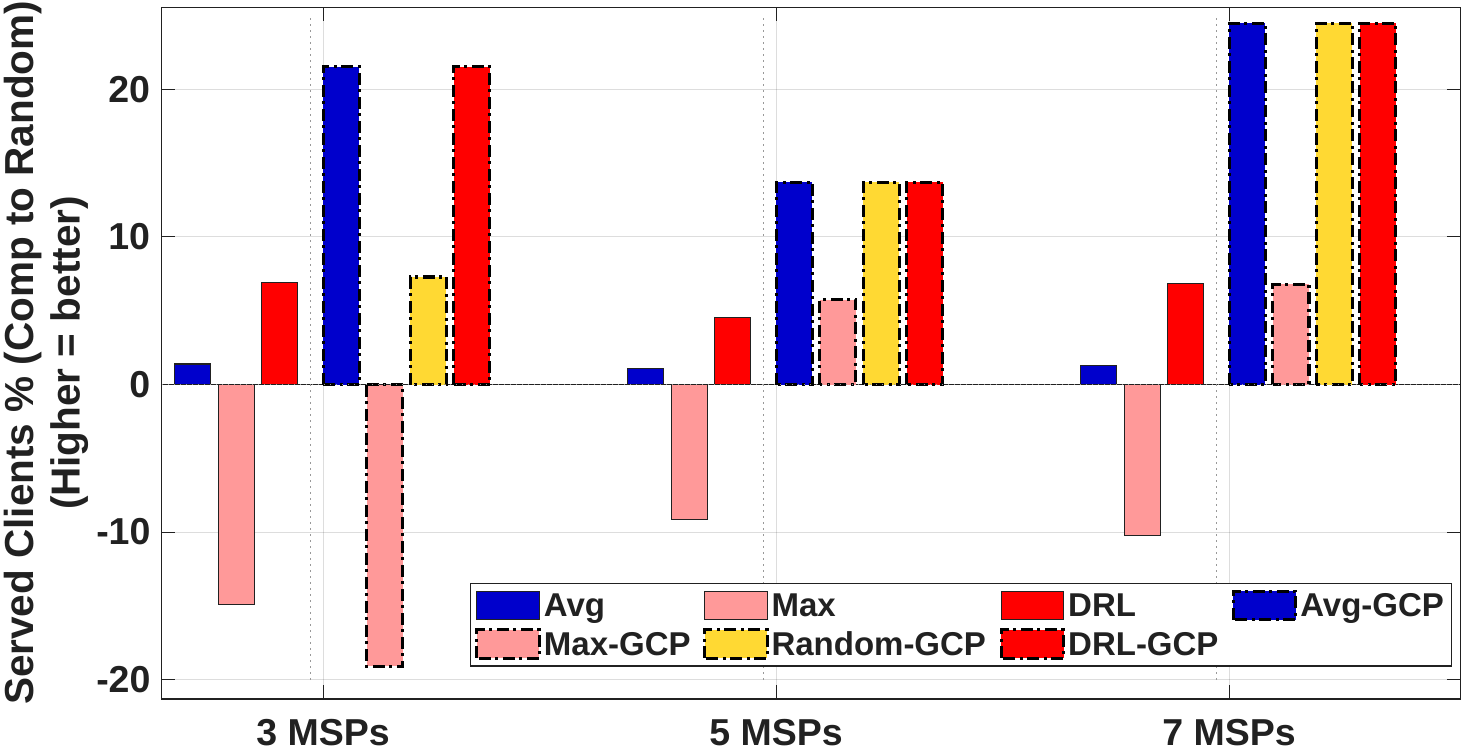} 
        \caption{\updated{100 Timesteps}}
    \end{subfigure}
     \begin{subfigure}[b]{0.32\textwidth}
        \centering
        \includegraphics[width=\textwidth]{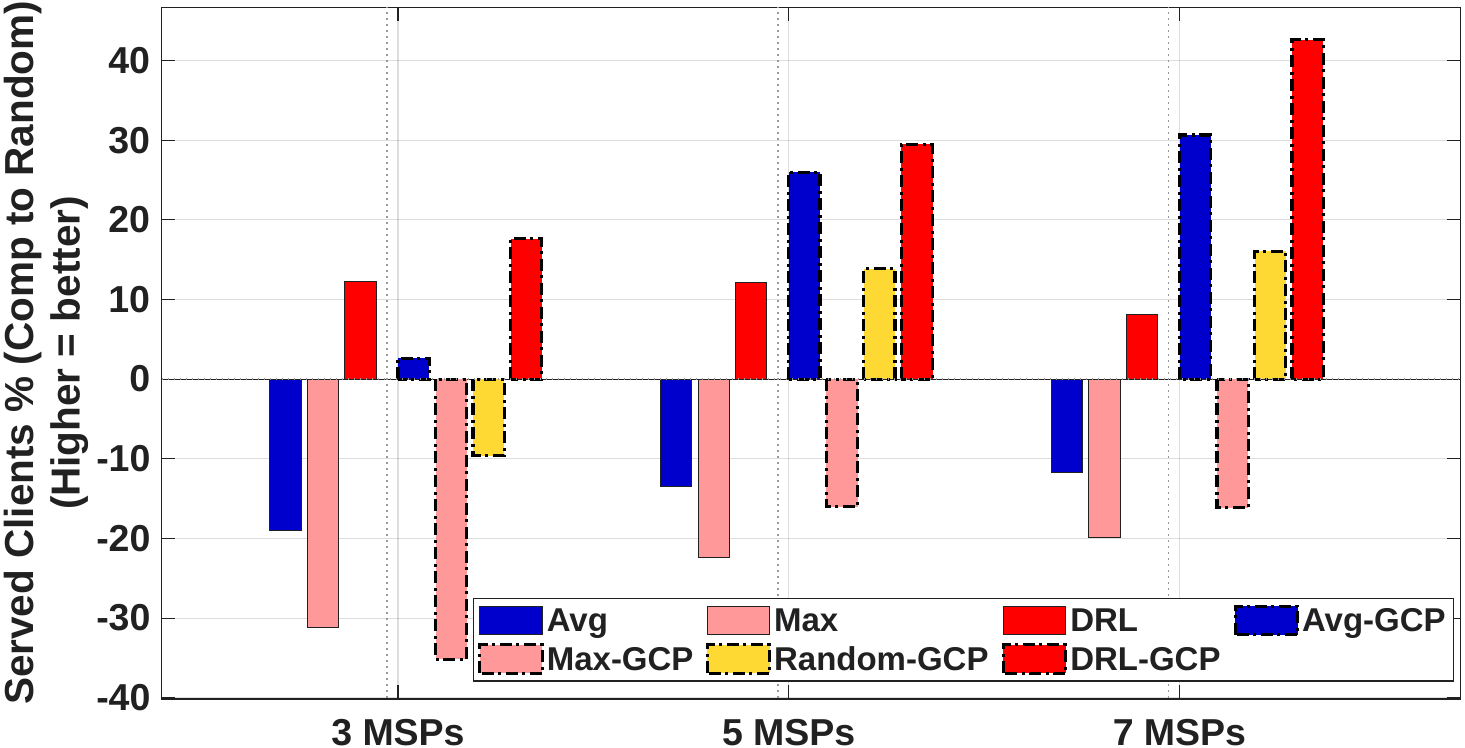} 
        \caption{\updated{ 150 Timesteps}}
    \end{subfigure}
    \begin{subfigure}[b]{0.32\textwidth}
        \centering
        \includegraphics[width=\textwidth]{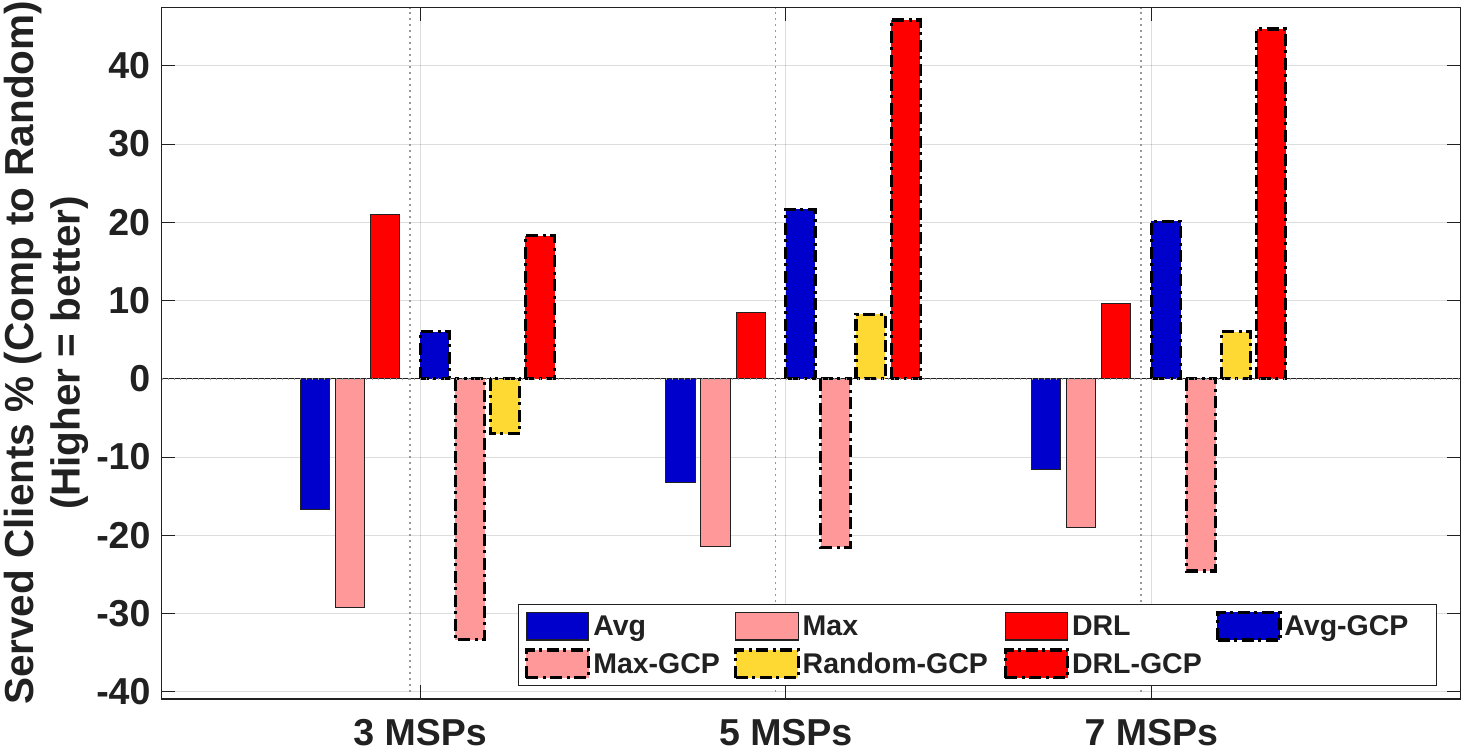} 
        
        \caption{\updated{200 Timesteps}}
    \end{subfigure}
    \caption{\updated{Comparison of percentage of Served clients compared to Random}}
    \label{fig:cop-serv}
\end{figure*}

\begin{figure*}
    \centering
    \begin{subfigure}[b]{0.32\textwidth}
        \centering
        \includegraphics[width=\textwidth]{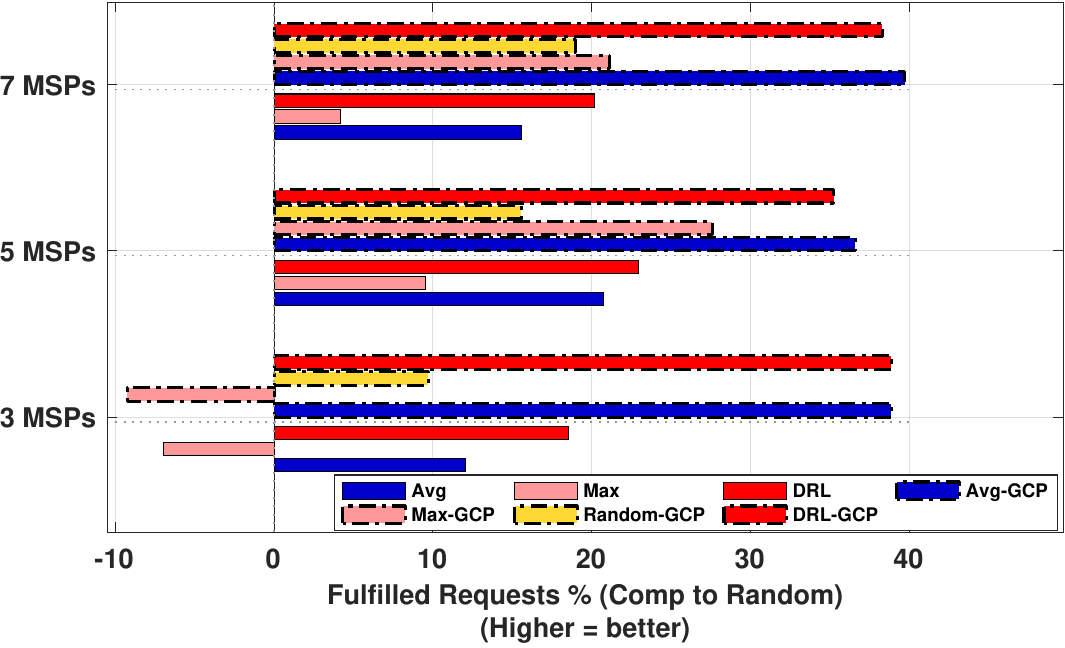} 
        \caption{\updated{100 Timesteps}}
    \end{subfigure}
     \begin{subfigure}[b]{0.32\textwidth}
        \centering
        \includegraphics[width=\textwidth]{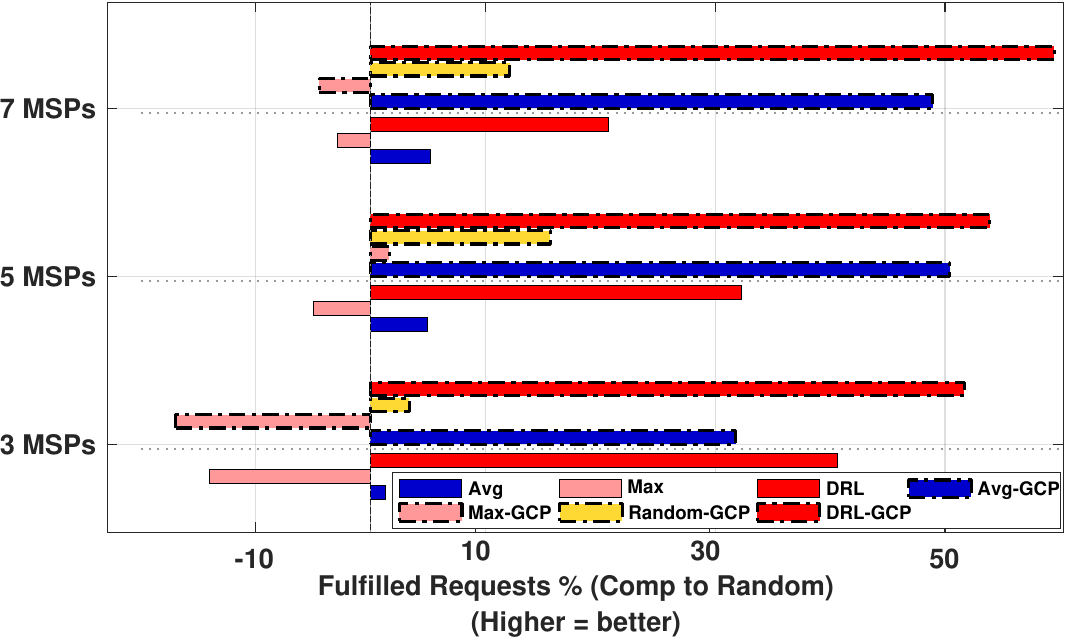} 
        \caption{\updated{150 Timesteps}}
    \end{subfigure}
    \begin{subfigure}[b]{0.32\textwidth}
        \centering
        \includegraphics[width=\textwidth]{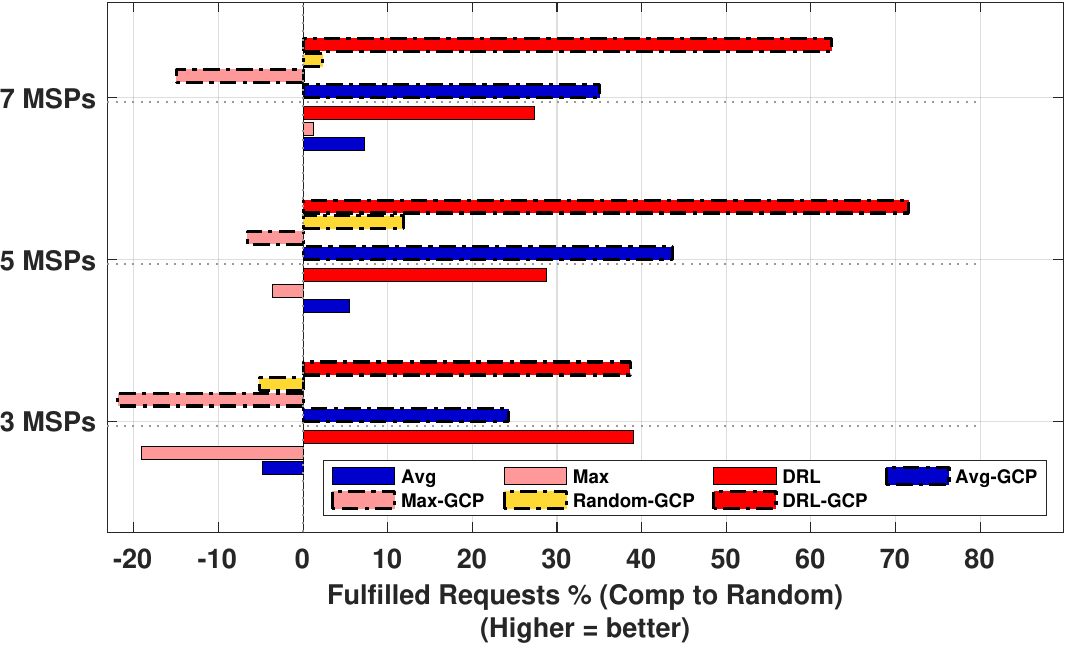} 
        \caption{\updated{200 Timesteps}}
    \end{subfigure}
    \caption{\updated{Comparison of percentage of fulfilled requests compared to Random }}
    \label{fig:cop-ful}
\end{figure*}
\subsection{\updated{Cooperative setting ($\boldsymbol{\bar{P}}$)}}

After assessing our DRL-based solution solving problem $\boldsymbol{P}$, we outline the different experiments used to assess our proposed solution to the cooperative setting $\boldsymbol{\bar{P}}$. To better assess our DRL with GCP solution (Hereafter, referred to as DRL-GCP), we benchmark its performance against two sets of solutions: the first set includes some of the non-GCP baselines introduced in the previous sections (i.e., Average, Max, and Random), while the second set comprises their improved counterparts with our GCP mechanism, such as Max-GCP, Avg-GCP, and Random-GCP. The second set of policies works with their intended behaviour while considering GCP dynamics. For example, Max-GCP policy dictates 1) the allocation of the maximum resources for all Heads (as in non-GCP Max policy) and 2) depositing the maximum possible credit to the GCP every timestep. 
Moreover, if no resource credit is available at any MSP to serve its Heads, the required resource credits for that MSP are automatically withdrawn from the GCP, and no resource deposit is made from that MSP at that timestep.

In this section, we create a total of three experiments as follows: 
\subsubsection{Experiment 1: Evaluating Resource allocation}


\begin{figure*}
    \centering
    \begin{subfigure}[b]{0.33\textwidth}
        \centering
        \includegraphics[width=\textwidth]{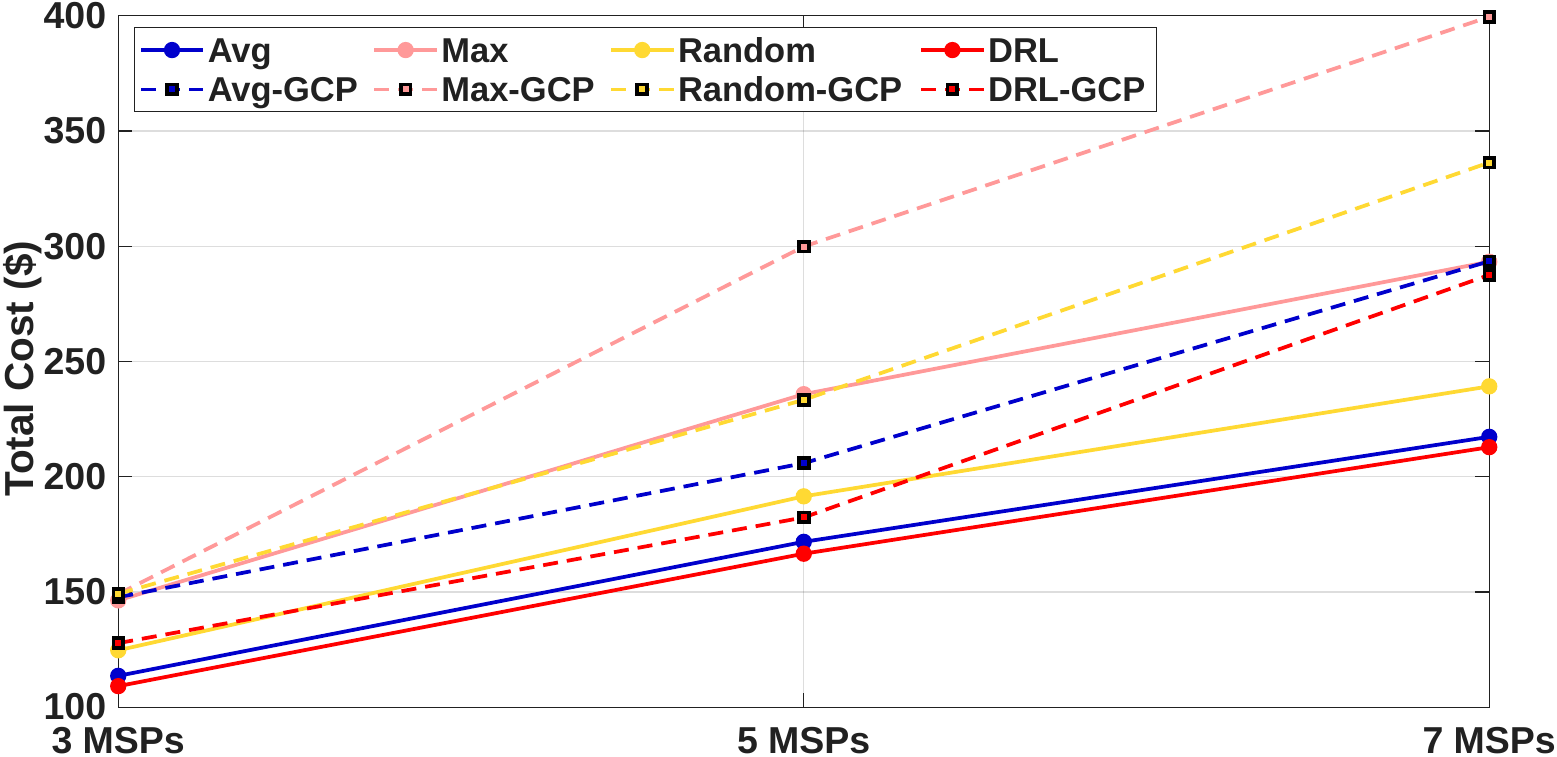} 
        \caption{\updated{100 Timesteps}}
    \end{subfigure}
     \begin{subfigure}[b]{0.32\textwidth}
        \centering
        \includegraphics[width=\textwidth]{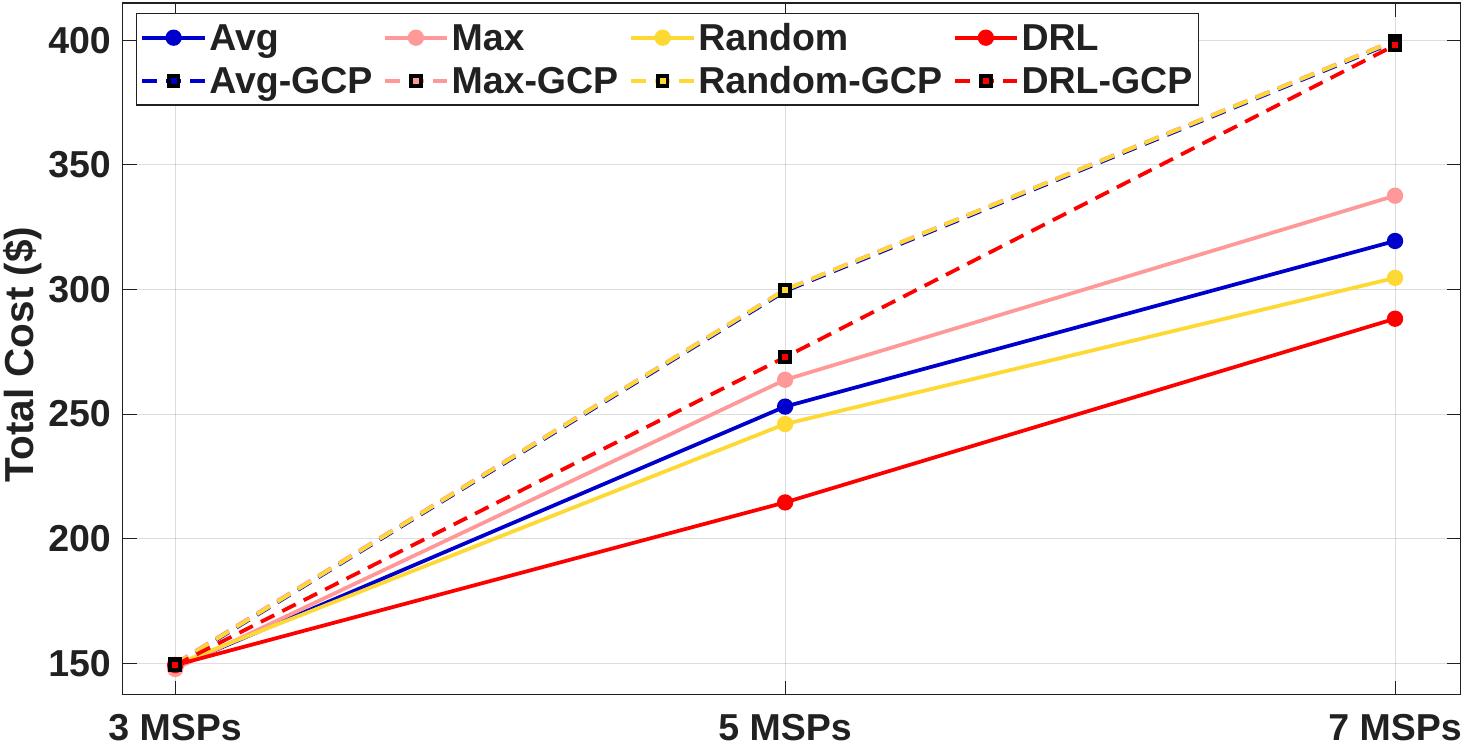} 
        \caption{\updated{150 Timesteps }}
    \end{subfigure}
    \begin{subfigure}[b]{0.32\textwidth}
        \centering
        \includegraphics[width=\textwidth]{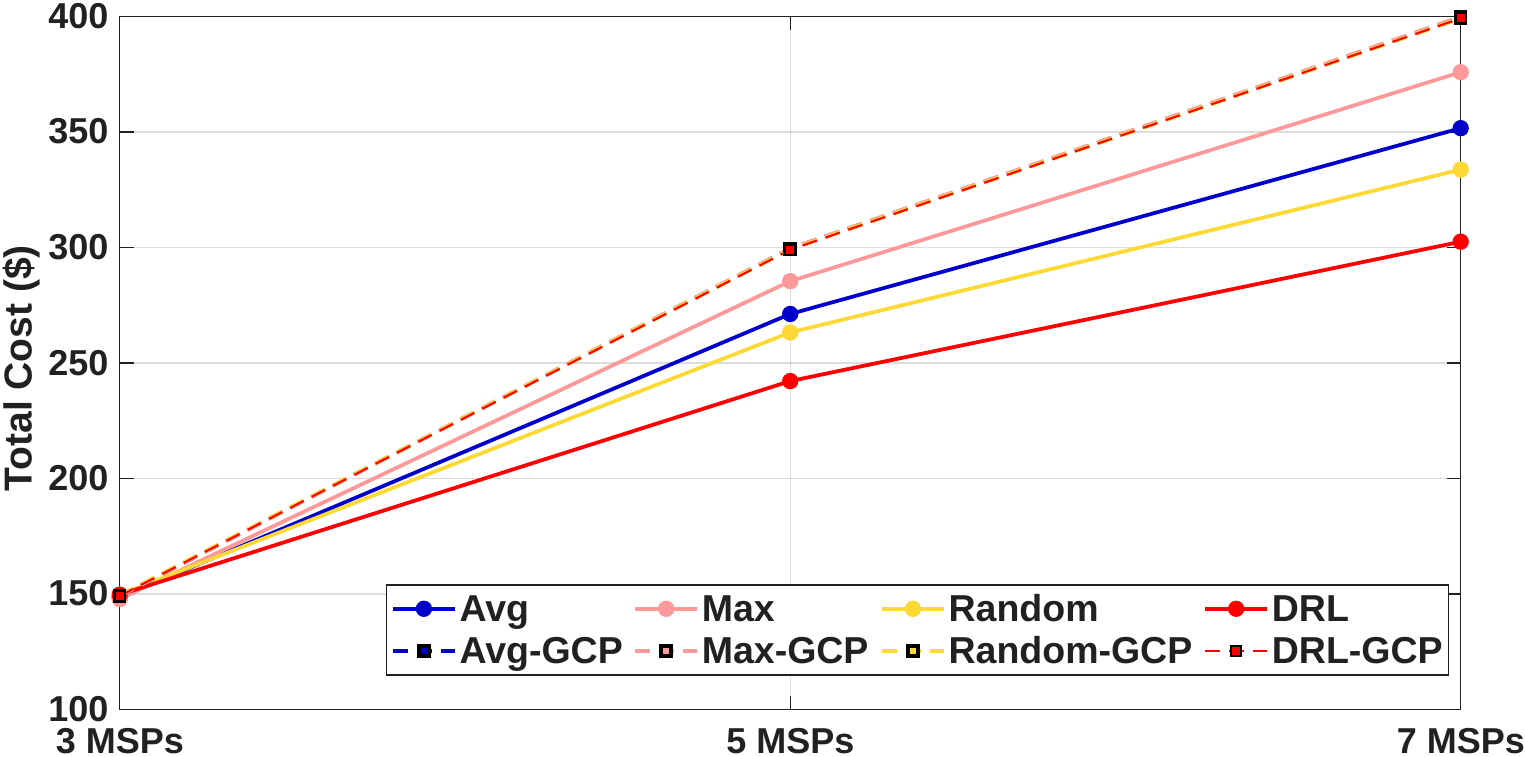} 
        \caption{\updated{200 Timesteps}}
    \end{subfigure}
    \caption{\updated{Comparison of percentage of Total Cost (\$)}}
    \label{fig:cop-cost}
\end{figure*}

The first experiment evaluates the resource allocation performance of the proposed solutions and baseline methods across multiple metrics: the requests completion percentage, indicating total requests done as illustrated in \fref{fig:requests-completion}, the total number of served clients by the requests as shown in \fref{fig:cop-serv}, the requests fulfillment percentage (indicating how many of the done requests satisfy required immersion levels) depicted in \fref{fig:cop-ful}, the total cost presented in \fref{fig:cop-cost}, and fairness measures including the distribution of completed requests among MSPs as detailed in \tref{tab:range} and \tref{tab:gini}.

We remind that in this experiment, each MSP receives multiple requests from their Heads across the time horizon. The requirements of these requests may vary depending on the number of VRCs that need to be served in each Head's VRoom. The MSP can deny/drop any requests if it does not have enough resources and cannot withdraw them.
The requests continue to come into the system, and we measure the performance after 100, 150, and 200 timesteps of these dynamic requests. Additionally, while increasing the number of MSPs adds an extra layer of complexity to the problem, as more MSPs need to collaborate, it can improve the completion ratio of requests.

Starting with request completion results in \fref{fig:requests-completion}, we can observe the different completion percentages of the different baselines across increasing numbers of MSPs and requests as shown in subfigures \fref{fig:requests-completion} (a,b, and c). The main observation of these subgraphs is summarized as follows: firstly, the GCP cooperation mechanism, if not managed correctly, would not increase the requests completion rate; in other words, baselines with GCP can be worse than the non-GCP solution as seen for Max vs Max-GCP in \fref{fig:requests-completion}(a)'s 3 MSP configuration and \fref{fig:requests-completion}(c)'s 7 MSP configuration among others. Secondly, across all subgraphs, we observe the dominance of non-GCP DRL among the non-GCP baselines, and similarly, DRL-GCP dominates among the GCP baselines, resulting in fewer dropped requests from MSPs to Heads.  
Thirdly, although some baseline methods perform similarly to DRL in certain cases, the disparity tends to grow as scenarios become more challenging, particularly with a higher number of requests, as observed in the comparisons of Avg versus DRL and Avg-GCP versus DRL-GCP across all subgraphs. From the same graph, DRL-GCP on average achieves 12-36\% higher request completion than all baselines. 

While \fref{fig:requests-completion} focused on the number of requests to represent performance, \fref{fig:cop-serv} expands this view by highlighting the number of VRCs served by each solution in comparison to the random non-GCP solution. Positive bars indicate improvement, while negative bars reflect inferior performance to non-GCP random. This figure reveals a more pronounced disparity between the solutions, as suboptimal policies may accept many requests that serve fewer VRCs. In contrast, other solutions may accept fewer requests with a greater number of VRCs. Notably, the DRL-GCP solution achieves an impressive 20-60\% increase in the number of clients served on average.

Unlike the previous two figures, \fref{fig:cop-ful} solely focuses on the fulfillment percentage of requests, with comparisons also made against the non-GCP random solution. The significant observation from this figure reveals that the 200-timestep, three-MSP configuration appears to represent the most demanding scenario, as evidenced by the minimal improvement relative to the random solution across all configurations. This occurs because the limited collective budget of the reduced number of MSPs proved insufficient to accommodate the substantial volume of requests. Moreover, in virtually all cases, our DRL solutions outperformed other variants. Additionally, across nearly all configurations, the average policy solution consistently achieved the second-best performance. Overall, our DRL-GCP solution achieved fulfillment rates that were 23-70\% higher than the competing methods, underscoring their effectiveness in managing requests.

\fref{fig:cop-cost} illustrates the total expenses associated with each solution, which is the cost incurred by all MSPs on their Heads. A key takeaway is that among the non-GCP solutions, the option with the lowest monetary cost is the preferred choice. On the other hand, in algorithms that incorporate the GCP, a lower expense may not be deemed optimal, as a higher expense might reflect a greater degree of collaboration among the MSPs, where some MSPs fund others in addition to fulfilling their own needs. In the exact figure, we can see that non-GCP DRL had the lowest cost among the other baselines, while DRL-GCP had a competitive price compared to the other GCP-based baselines.

Finally, in cooperative scenarios, while multiple MSPs work together to increase the number of completed and fulfilled requests, it is crucial to ensure that this cooperation remains balanced and effective. \updated{Specifically, all MSPs should serve requests without one MSP dominating in terms of volume. To measure this, we use two metrics across all solutions and baselines. The first is the range, defined as the difference between the MSP with the most completed requests and the one with the fewest. The second is the Gini coefficient, which measures equality among MSPs and ranges from 0 (perfect equality) to 1 (maximum imbalance, where one MSP serves all requests). The Gini coefficient is calculated over the vector $\mathbf{x}=[x_1,\dots,x_M]$, where $x_m$ is the number of completed requests served by MSP $m$ in the experiment. We use the standard definition: $G(\mathbf{x})=\frac{\sum_{i=1}^{M}\sum_{j=1}^{M}|x_i-x_j|}{2M^2\bar{x}}$, where $\bar{x}=\frac{1}{M}\sum_{i=1}^{M}x_i$.}
. The results in \tref{tab:range} show the range values across all baselines. A smaller value is preferred, highlighted in red bold for non-GCP solutions and in blue bold for GCP-based variants. These results show that our non-GCP DRL performed best among non-cooperative baselines, indicating balanced resource allocation across MSPs. Among GCP variants, our DRL-GCP achieved the lowest range, reaching up to 51\% more fairly distributed requests than baselines. Moreover, the presence of the GCP helped balance the number of served requests among MSPs, as shown by the smaller values in the GCP variants. The same conclusion is supported by \tref{tab:gini} of the gini coefficient, where both DRL and DRL GCP had the most balanced request distributions.

The conclusion from this experiment highlights the superiority of our proposed solutions. They achieved the highest request completion and fulfillment rates, served the largest number of clients, and maintained the lowest cost in the non-GCP setup, while offering a competitive price compared to GCP-based alternatives. Moreover, these solutions demonstrated the most balanced distribution of served requests across MSPs, reinforcing their effectiveness not only in optimizing individual performance metrics but also in promoting fairness and stability in cooperative environments.

\begin{table}
\centering
\caption{Performance Across 100, 150, and 200 Request Configurations (Lower range = better), best results from the non-GCP configuration are displayed in bold red color, while best results from the GCP configuration are displayed in bold blue color}
\scalebox{0.61}{ 
\begin{tabular}{|c|c|c|c|c|c|c|c|c|c|}
\hline
\textbf{Reqs} & \textbf{MSPs} & \textbf{Avg} & \textbf{Max} & \textbf{Random} & \textbf{DRL} & \textbf{Avg-GCP} & \textbf{Max-GCP} & \textbf{Random-GCP} & \textbf{DRL-GCP} \\
\hline
\multirow{3}{*}{100} 
& 3 MSPs & 58 & 76 & 62 & \textcolor{red}{\textbf{42}} & \textcolor{blue}{\textbf{0}} & 50 & 19 & \textcolor{blue}{\textbf{0}} \\
& 5 MSPs & 58 & 76 & 63 & \textcolor{red}{\textbf{42}} & \textcolor{blue}{\textbf{0}} & 17 & \textcolor{blue}{\textbf{0}} & \textcolor{blue}{\textbf{0}} \\
& 7 MSPs & 63 & 79 & 67 & \textcolor{red}{\textbf{49}} & \textcolor{blue}{\textbf{0}} & 38 & \textcolor{blue}{\textbf{0}} & \textcolor{blue}{\textbf{0}} \\
\hline
\multirow{3}{*}{150} 
& 3 MSPs & 92 & \textcolor{red}{\textbf{78}} & 113 & 92 & 70 & 50 & 69 & \textcolor{blue}{\textbf{46}} \\
& 5 MSPs & 120 & 126 & 114 & \textcolor{red}{\textbf{96}} & 10 & 88 & 43 & \textcolor{blue}{\textbf{0}} \\
& 7 MSPs & 124 & 129 & 119 & \textcolor{red}{\textbf{101}} & 40 & 85 & 64 & \textcolor{blue}{\textbf{0}} \\
\hline
\multirow{3}{*}{200} 
& 3 MSPs & 92 & \textcolor{red}{\textbf{78}} & 112 & 106 & \textcolor{blue}{\textbf{50}} & \textcolor{blue}{\textbf{50}} & 69 & 88 \\
& 5 MSPs & 170 & 176 & 163 & \textcolor{red}{\textbf{142}} & 97 & 93 & 113 & \textcolor{blue}{\textbf{43}} \\
& 7 MSPs & 174 & 179 & 167 & \textcolor{red}{\textbf{149}} & 85 & 85 & 120 & \textcolor{blue}{\textbf{78}} \\
\hline
\end{tabular}
}
\label{tab:range}
\end{table}
\begin{table}
\centering
\caption{Gini-Coefficient Across 100, 150, and 200 Request Configurations (Lower value = better)}
\scalebox{0.61}{
\begin{tabular}{|c|c|c|c|c|c|c|c|c|c|}
\hline
\textbf{Reqs} & \textbf{MSPs} & \textbf{Avg} & \textbf{Max} & \textbf{Random} & \textbf{DRL} & \textbf{Avg-GCP} & \textbf{Max-GCP} & \textbf{Random-GCP} & \textbf{DRL-GCP} \\
\hline
\multirow{3}{*}{100} 
& 3 MSPs & 0.16 & 0.25 & 0.17 & \textbf{\textcolor{red}{0.11}} & \textbf{\textcolor{blue}{0.00}} & 0.17 & 0.05 & \textbf{\textcolor{blue}{0.00}} \\
& 5 MSPs & 0.11 & 0.18 & 0.12 & \textbf{\textcolor{red}{0.07}} & \textbf{\textcolor{blue}{0.00}} & 0.04 & \textbf{\textcolor{blue}{0.00}} & \textbf{\textcolor{blue}{0.00}} \\
& 7 MSPs & 0.15 & 0.23 & 0.16 & \textbf{\textcolor{red}{0.11}} & \textbf{\textcolor{blue}{0.00}} & 0.09 & \textbf{\textcolor{blue}{0.00}} & \textbf{\textcolor{blue}{0.00}} \\
\hline
\multirow{3}{*}{150} 
& 3 MSPs & 0.26 & 0.26 & 0.26 & \textbf{\textcolor{red}{0.18}} & 0.15 & 0.17 & 0.17 & \textbf{\textcolor{blue}{0.09}} \\
& 5 MSPs & 0.22 & 0.25 & 0.19 & \textbf{\textcolor{red}{0.12}} & 0.02 & 0.19 & 0.08 & \textbf{\textcolor{blue}{0.00}} \\
& 7 MSPs & 0.27 & 0.30 & 0.23 & \textbf{\textcolor{red}{0.18}} & 0.06 & 0.17 & 0.11 & \textbf{\textcolor{blue}{0.00}} \\
\hline
\multirow{3}{*}{200} 
& 3 MSPs & 0.26 & 0.26 & 0.26 & \textbf{\textcolor{red}{0.20}} & \textbf{\textcolor{blue}{0.17}} & 0.17 & 0.17 & 0.17 \\
& 5 MSPs & 0.27 & 0.30 & 0.23 & \textbf{\textcolor{red}{0.18}} & 0.15 & 0.20 & 0.18 & \textbf{\textcolor{blue}{0.06}} \\
& 7 MSPs & 0.33 & 0.36 & 0.29 & \textbf{\textcolor{red}{0.22}} & 0.15 & 0.17 & 0.17 & \textbf{\textcolor{blue}{0.09}} \\
\hline
\end{tabular}
}
\label{tab:gini}
\end{table}

\subsubsection{Experiment 2: Adaptability to changes in environment}

\begin{figure}
    \centering
    \begin{subfigure}[b]{0.32\textwidth}
        \centering
        \includegraphics[width=\textwidth]{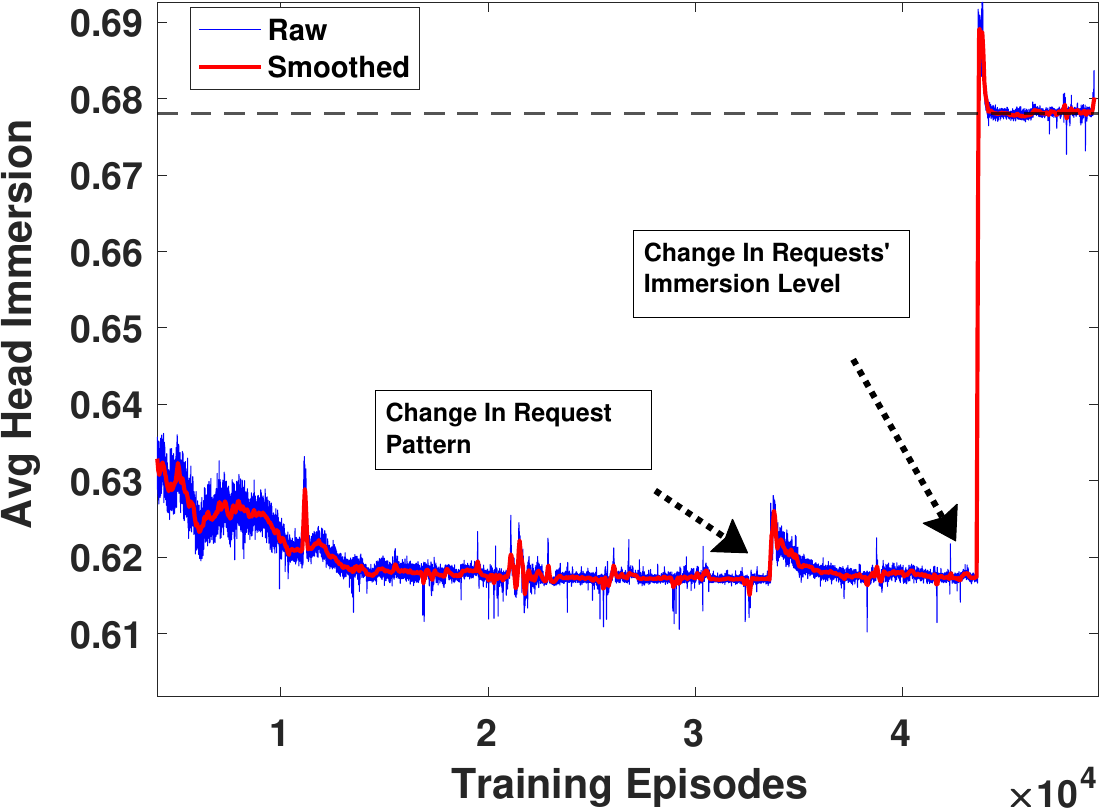} 
        \caption{Average head Served adaptation}
        \label{fig:cop-avghead-done-conv}
    \end{subfigure}
     \begin{subfigure}[b]{0.32\textwidth}
        \centering
        \includegraphics[width=\textwidth]{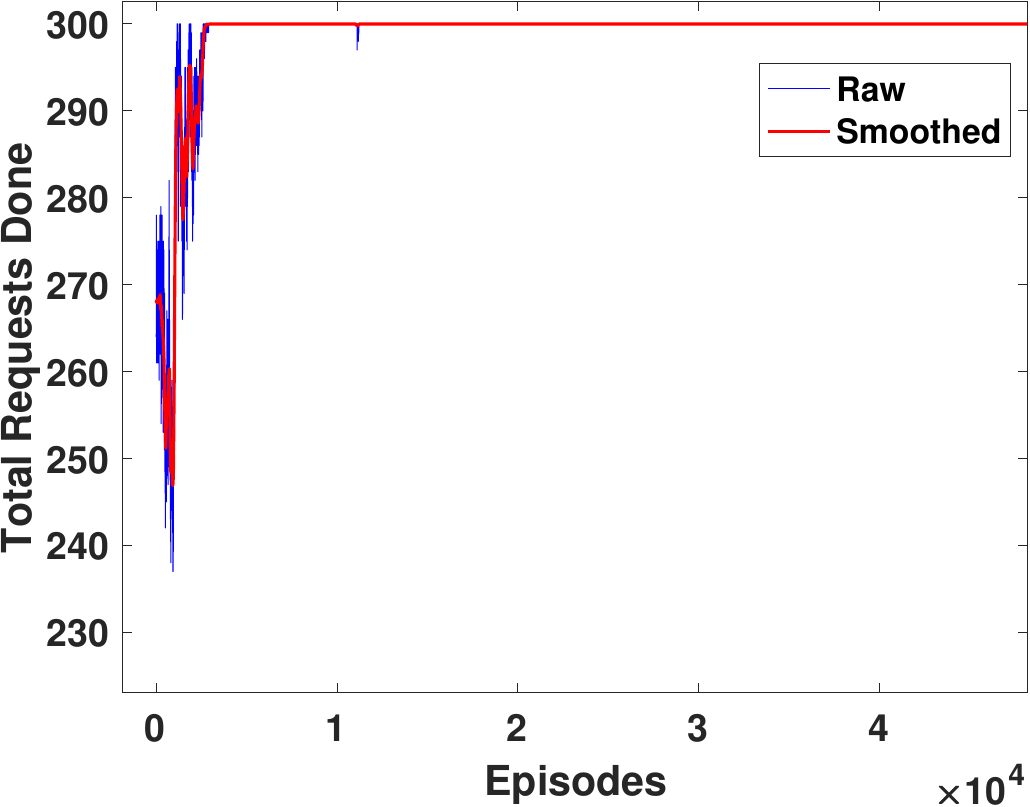} 
        \caption{Requests convergence adaptation }
        \label{fig:cop-req-done-conv}
    \end{subfigure}
    \begin{subfigure}[b]{0.32\textwidth}
        \centering
        \includegraphics[width=\textwidth]{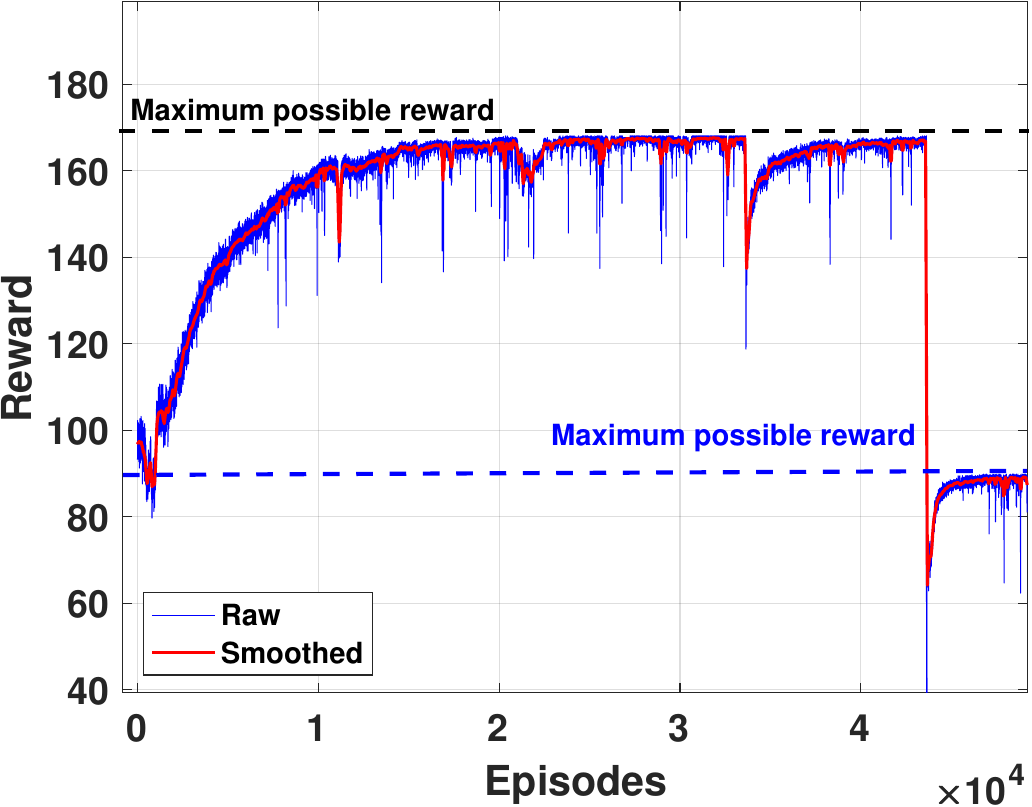} 
        \caption{Reward re-convergence}
        \label{fig:cop-rew-done-conv}
    \end{subfigure}
    \begin{subfigure}[b]{0.32\textwidth}
        \centering
        \includegraphics[width=\textwidth]{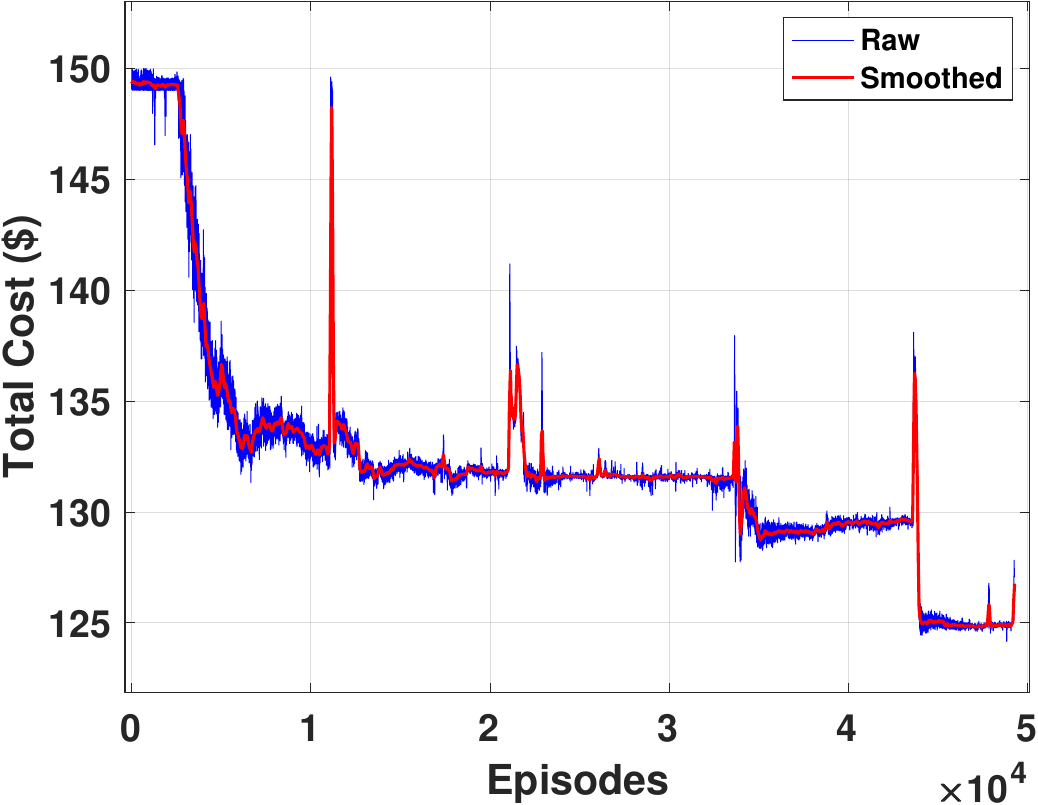}
        \caption{Cost convergence}
        \label{fig:cop-cost-done-conv}
    \end{subfigure}
   
    \caption{Assessments of the Cooperative control of the MSPs}
    \label{fig:non-cooperative_conv-adaptaton}
\end{figure}
\updated{Moreover, since the Metaverse environment is dynamic, with various events emerging suddenly and diverse needs for Heads, MSPs must be prepared to handle changes in request patterns from the Heads. Therefore, in the second experiment, we evaluate the resilience and adaptability of our solutions by subjecting the environment to multiple unexpected changes and observing how our DRL agent handles the resource allocation for these requests. In our case, we consider two types of changes.
The first is a sudden shift in the request distribution of the Heads, simulating unexpected requests, such as seasonal concerts. Such a change can negatively impact poorly trained DRL agents and any history-based baselines, potentially leading to under- or over-provisioning.
The second change simulates a dynamic shift in the immersion threshold required by some of the Heads. This reflects real-world scenarios in which certain Heads may demand higher-quality experiences in their VRooms. In that experiment, the DRL model was initially trained to meet an immersion threshold of 60\%, which was later increased to 67\%.

The results, shown in \fref{fig:non-cooperative_conv-adaptaton}, highlight the system's adaptability. Specifically, \fref{fig:non-cooperative_conv-adaptaton}(a) shows that our DRL-GCP agent quickly adapted to the two environment changes, shown by the average Head immersion after only a few episodes. \fref{fig:non-cooperative_conv-adaptaton}(b) indicates that our agent did not drop any requests even after the environment changed. \fref{fig:non-cooperative_conv-adaptaton}(c) illustrates the agent’s reward re-convergence, while \fref{fig:non-cooperative_conv-adaptaton}(d) presents the total cost changes. 
The sudden drops in \fref{fig:non-cooperative_conv-adaptaton}(c)(d) occur immediately after the injected distribution/threshold shifts. At that moment, the previously learned policy is temporarily mismatched to the new demand/constraint regime (e.g., higher target immersion or different request mix), which reduces reward and alters cost until the policy re-stabilizes under the new regime; the subsequent recovery shows the agent re-adapting over the next episodes.

}

\subsubsection{Experiment 3: Optimal Withdrawal-to-Insertion Ratio Analysis}
Finally, since systems employing shared pools are subject to the free-rider problem, in which certain MSPs with suboptimal resource-allocation strategies exploit the collective resource by continuously withdrawing from the GCP without making any effective contribution, we address this issue in two steps. The first step is to enhance the resource allocation policies of all MSPs through DRL, and the second is to implement withdrawal constraints proportional to insertion limits. However, given the inherent complexity in determining optimal withdrawal thresholds, we conducted a third experiment specifically designed to evaluate the withdrawal-to-insertion ratio that maximizes system-wide performance efficiency. \updated{In our third experiment, shown in \fref{fig:cop-sp_limit}, we examined multiple withdrawal-to-insertion ratios. For instance, a factor of 1.6 indicates that an MSP can receive up to 160\% of the GC units it has inserted into the GCP when needed. From the same figure, it is evident that a higher withdrawal limit increases the number of requests and served clients while reducing the number of failed clients. This also suggests that our centralized DRL solution does not explicitly require the setup of a safeguard to limit free-riders, the DRL agent automatically optimizes the allocation to balance between varying types and budget MSPs for the long-term goal of increasing the number of overall requests.}







\begin{figure}
    \centering
    \includegraphics[width=1\linewidth]{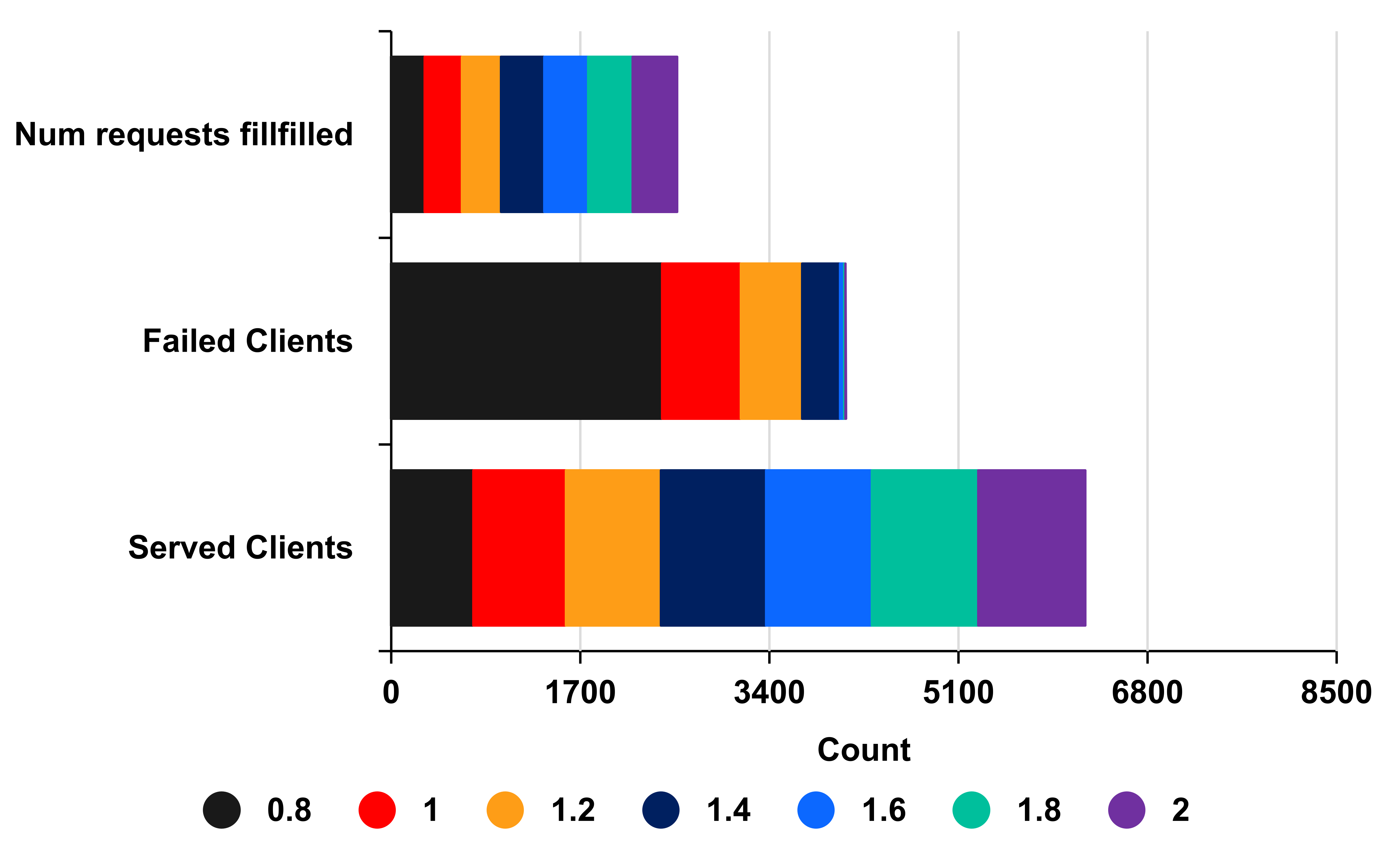}
    \caption{\updated{The effect of limiting the withdrawal-to-insertion.}}
    \label{fig:cop-sp_limit}
\end{figure}

\section{CONCLUSION AND FUTURE WORK}
\label{sec:conc_civic}
This paper introduced CIVIC, a novel framework for optimizing resource allocation and enabling intelligent cooperation across multiple MSPs. Unlike prior works that overlooked digital twin accuracy in immersion modeling, relied on rigid resource allocation strategies, or were constrained to single-MSP settings, CIVIC provides a unified solution that integrates virtual environment rendering, digital twin synchronization, and immersion-aware provisioning across diverse virtual rooms in dynamic multi-MSP ecosystems.

We formulated the resource allocation challenge as two NP-hard problems in different settings: a non-cooperative model where MSPs act independently, and a cooperative model that leverages GCP for dynamic inter-MSP resource sharing. To address the computational complexity, we designed DRL agents that account for system heterogeneity and load uncertainty while coordinating resource distribution. Extensive experiments demonstrated that our DRL-based approaches consistently outperformed baseline strategies in both efficiency and fairness.

Specifically, our DRL-GCP cooperative model achieved improvements of 12–36\% in request completion, 23–70\% in fulfillment rate, 20–60\% in client capacity, and up to 51\% in equitable request distribution—all while maintaining competitive operational costs. These results validate that CIVIC delivers robust, adaptive, and scalable performance under fluctuating load conditions, establishing its viability for real-world multi-MSP Metaverse deployments powering the next generation of immersive digital experiences.

\bibliographystyle{IEEEtran}
\bibliography{references}

\appendix{}
\label{sec:appndx}
\begin{theorem}[NP-hardness of the cooperative problem $\bar P$]\label{thm:nphard-barP}
{The decision version of $\boldsymbol{\bar{P}}$ (objective~\eqref{eq:objective2} with constraints (c1)--(c13) and GCP update~\eqref{eq:pool_dynamics}) is NP-hard.}
\end{theorem}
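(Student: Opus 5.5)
We argue by a polynomial-time reduction from a classical NP-complete problem into a restricted family of instances of the decision version of $\boldsymbol{\bar{P}}$: given an instance and an integer $K$, decide whether there exist controls $\{\mathbf{u}_h^{(t)}\}$ and donation fractions $\{\delta_m^{(t)}\}$ satisfying (c1)--(c14) with $\sum_t\sum_m L_m^{(t)}\ge K$. Because $\boldsymbol{\bar{P}}$ contains $\boldsymbol{P}$ as the special case $\delta_m^{(t)}\equiv 0$ (then $GCP^{(t)}\equiv 0$, so (c11) forces $W_m^{(t)}=0$ and we recover (c4) and (c8)), it suffices to embed the hard structure already in a non-cooperative-like slice. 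We take \textsc{Knapsack} (equivalently \textsc{Subset-Sum}) as the source problem and build on the budget dynamics (c8)/(c13).

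The plan is as follows. Given items $i=1,\dots,n$ with integer weights $w_i$, values $p_i$, capacity $W$ and target $P$, we construct a single MSP ($M=1$) with $T=n$ slots and one Head, active exactly once per slot ($\mathbb{I}_{\text{request}}(h,t)=1$). We set $Budget_1^{(0)}=W$ and choose the VRoom parameters for slot $t=i$ so that the admissible controls split into two behaviours: a \emph{minimal} control $\mathbf{u}^{\min}$ (all of $B,f,\beta$ at their lower bounds) whose cost is negligible, or exactly $0$ after a fixed rescaling of base resources, and whose immersion falls below $I_{\text{threshold}}$; and every control reaching $I_{\text{threshold}}$ costs at least $w_i$, with one such control costing exactly $w_i$. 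This gives a binary accept/reject choice per slot with cost $w_i$. The all-or-nothing rule (c3) with one active Head makes $L_1^{(t)}=z_h^{(t)}$. To encode the values $p_i$ we replicate: slot $i$ becomes $p_i$ parallel MSPs, or alternatively $p_i$ consecutive identical sub-slots that share a single cost. The cleaner route is to reduce from \textsc{Subset-Sum} with $p_i=w_i$ and use $p_i$ Heads under one MSP, which the all-or-nothing rule forces to be accepted together. Taking $K=\sum_{i\in S}p_i$ threshold, the instance is feasible with objective at least $K$ iff a subset of items with total weight at most $W$ and value at least $P$ exists. Feasibility of the GCP constraints is trivial here because $\delta\equiv0$ is allowed. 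To stay faithful to the cooperative model we can also allow $\delta>0$ and argue that donating never helps a single MSP, since $GCP$ only returns what was donated.

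The main obstacle is realising arbitrary integer costs $w_i$ and the threshold gap through the fixed functional forms (SSIM/VMAF, the harmonic mean $A_h$, and the scaling law \eqref{eqn:scaling}). Our approach is to exploit the free per-VRoom parameters: the base resources $R_b^y$ (via $P_v,O_v,A_v,N_d,\dots$), the client count $C_h^{(t)}$, $\iota,\Lambda,\vartheta$, and the bounds $B_{v_{\min}}=B_{v_{\max}}$, $f_{v_{\min}}=f_{v_{\max}}$. Collapsing $B$ and $f$ to singletons and letting $\mathcal{B}_{v_h}$ contain only two levels $\{\beta_{lo},\beta_{hi}\}$ makes the cost an affine function of $\beta$ through $Net_b^{DT}=\varphi B_{\min}\beta$. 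The immersion then takes exactly two values, and $I_{\text{threshold}}$ is placed between them. We then scale $C_h^{(t)}$ or the base resources so that the cost difference equals $w_i$, and subtract the common baseline cost by shifting budgets. All numbers remain polynomial in the input's bit length, which we must check because of the real exponents. If exact rational costs prove awkward, we reduce instead from \textsc{Partition}, using costs only up to a uniform rounding that preserves the yes/no gap. Finally, we note that membership in NP is not claimed, so NP-hardness of the decision version follows.
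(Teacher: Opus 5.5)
Your calibration (collapsing $B$ and $f$, using two $\beta$ levels so that cost is affine in $\beta$) is fine, and so is your argument that donations cannot help a lone MSP. The reduction breaks where the item values $p_i$ have to enter the objective, and the slice you reduce into cannot be repaired.

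\textbf{The target instances carry no hardness.} With one MSP and one active Head per slot, every slot contributes $L_1^{(t)}\in\{0,1\}$. After your calibration, the constructed instance only asks whether some set $S$ of at least $K$ slots satisfies $\sum_{t\in S}w_t\le W$. That is decided by sorting the $w_t$. The requirement $\sum_{i\in S}p_i\ge P$ is never encoded, and $K$ cannot be set to $\sum_{i\in S}p_i$, since $S$ is not part of the input.

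This is structural, not an artifact of your calibration. For a single MSP $m$, nonnegative costs make (c4) and (c8) equivalent to the one constraint $\sum_{t}{Cost_{Total}^{m}}^{(t)}\le Budget_m^{(0)}$. Each slot then has only two undominated choices: the cheapest admissible control, or the cheapest control putting every active Head at threshold. Serving the slots with the smallest cost increments first is optimal. Your remark that the pool only returns what was donated (indeed $Budget_m^{(t)}+GCP^{(t)}$ decreases by exactly the slot cost) shows that allowing $\delta>0$ changes nothing. Under $\delta\equiv 0$, several MSPs simply decouple into such problems. On your instances the increments are the given $w_i$, so a correct reduction into them would give $\mathrm{P}=\mathrm{NP}$.

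\textbf{Each value gadget fails for a concrete reason.} Putting $p_i$ Heads under one MSP still earns a single unit, because (c3) makes $L_m^{(t)}$ binary; the all-or-nothing rule is precisely what collapses values. Consecutive sub-slots cannot share one cost, because cost is incurred slot by slot and slots interact only through the remaining budget, so each sub-slot can be served on its own. Finally, $p_i$ parallel MSPs have separate budgets when $\delta\equiv0$, and nothing forces them to be served together.

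\textbf{The paper's route.} The paper argues differently: $M=n$ MSPs and $T=2$, donations at $t=1$ load the pool with $W$ credits, and (c11) plays the role of the knapsack capacity at $t=2$. So it uses the cooperative coupling rather than switching it off. It does not supply the missing ingredient, however. Its Step 3 asserts that serving $S$ adds $\sum_{i\in S}v_i$ to the objective, whereas by (c3) $\sum_m L_m^{(2)}=|S|$. The resulting cardinality problem is again solved greedily, by serving the MSPs with the smallest incremental deficits.

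Moreover, suppose (c4) and (c8) are kept literally alongside (c13), as the list (c1)--(c13) in the statement says. Then (c4) forces $W_m^{(t)}=0$, after which (c8) and (c13) together force $D_m^{(t)}=0$. So $\bar{P}$ collapses to $P$, whose objective decouples over MSPs into exactly the cardinality problems above.

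\textbf{What a valid proof needs.} It has to work with the intended semantics, in which deficits are drawn from the pool. It also needs a gadget in which one binary serve decision is charged against two independently binding capacities, for instance pool balances at two different slots linked by the one-slot donation delay. That kind of structure does carry hardness: maximizing the number of chosen items under two knapsack constraints is NP-hard by a reduction from cardinality-constrained Subset-Sum (weights $L+w_i$ and $L-w_i$ with capacities $kL+W$ and $kL-W$ for large $L$). With a single capacity and unit values, by contrast, it is solved by sorting.
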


\begin{proof}[Proof]
To show the claimed hardness in a transparent way, we reduce from 0--1 \emph{Knapsack}. Given items $i=1,\dots,n$ with weights $w_i$, values $v_i\in\mathbb{Z}_{+}$, capacity $W$, and target value $V^\star$, construct an instance of $\boldsymbol{\bar{P}}$ with $M=n$ MSPs and a short horizon $T=2$.

\textbf{Step 1 (calibration).} For each MSP $i$, fix the VRoom and model parameters so that meeting the immersion threshold for its (single) active Head requires a \emph{minimal} total allocation cost exactly $w_i$. This uses the same immersion and cost functions already in $\bar{P}$; feasibility checking for a proposed allocation remains polynomial.

{\textbf{Step 2 (GCP capacity).} Set all per-MSP budgets at $t=2$ to zero, by allowing a subset of MSPs at time $t=1$ to donate their full surpluses so that the GCP holds exactly $W$ credits at $t=2$. This is permitted by the donation variable $\delta_m^{(t)}\!\in[0,1]$, the donation rule (c10), and the GCP recursion \eqref{eq:pool_dynamics}, which carries donations from $t=1$ into available credits at $t=2$.}

{\textbf{Step 3 (equivalence).} At $t=2$, constraint (c11) limits the aggregate MSP deficit of the selected fully served MSPs by the available $W$ credits in the GCP. Because $L_m^{(t)}=1$ iff \emph{all} active Heads of MSP $m$ meet the threshold (c3), choosing which MSPs to fully serve at $t=2$ consumes exactly $\sum_{i\in S} w_i$ credits and yields total value $\sum_{i\in S} v_i$ in the objective sum $\sum_{m} L_m^{(2)}$. Hence, there exists a feasible allocation with $\sum_m L_m^{(2)} \ge V^\star$ iff the knapsack instance admits a subset $S$ with $\sum_{i\in S} w_i \le W$ and $\sum_{i\in S} v_i \ge V^\star$. Therefore, $\bar P$-DEC is NP-hard.}
\end{proof}

\begin{theorem}[NP-hardness of the non cooperative problem $\boldsymbol{P}$]\label{thm:nphard-P}
{The decision version of $\boldsymbol{P}$ (objective~\eqref{eq:objective} with constraints (c1)--(c8)) is NP-hard.}
\end{theorem}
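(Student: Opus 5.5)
The plan is to mirror the reduction in Theorem~\ref{thm:nphard-barP}, again starting from 0--1 Knapsack. Since $\boldsymbol{P}$ has no GCP, the only coupling is inter-temporal, through the budget recursion \eqref{eq:budget_dynamics_noncoop}. The knapsack capacity $W$ will therefore be carried by a single MSP's budget over time, not by the pool.

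\textbf{Construction.} Given items $(w_i,v_i)_{i=1}^n$, capacity $W$ and target $V^\star$:
\begin{itemize}
\item Use $M=1$ with $Budget_1^{(0)}=W$ and horizon $T=n$, so slot $i$ corresponds to item $i$.
\item In slot $i$, activate a Head whose VRoom parameters are calibrated as in Step~1 of Theorem~\ref{thm:nphard-barP}. The cheapest threshold-meeting control $\mathbf{u}_h^{(i)}$ should cost exactly $w_i$ more than the cheapest admissible control.
\item Shift the budget by $\sum_i c_i^{\min}$, where $c_i^{\min}$ is that cheapest control's cost, so that \enquote{declining} slot $i$ costs nothing extra.
\item Then \eqref{eq:budget_constraint} and \eqref{eq:budget_dynamics_noncoop} together say that the set $S$ of fulfilled slots satisfies $\sum_{i\in S} w_i\le W$. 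The objective \eqref{eq:objective} counts $|S|$.
\end{itemize}

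\textbf{Step needing the most care: encoding the values $v_i$.} The all-or-nothing rule \eqref{eq:L_m_definition} makes each slot worth exactly $1$, whatever the number of active Heads. The naive construction above therefore only yields cardinality-constrained knapsack with unit profits, which greedy solves by taking the smallest $w_i$. So I would not use it as stated. I would first try to encode profit $v_i$ as $v_i$ distinct slots of a \enquote{block} $i$. A single up-front decision must then tie all of them: either all $v_i$ slots are fulfilled at total extra cost $w_i$, or none are.

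Enforcing this through the budget dynamics alone is the crux. The natural gadget is a first slot of block $i$ whose threshold-meeting allocation is expensive ($w_i$). The later slots of the block would be cheap only if that first slot was served. However, nothing in (c1)--(c8) links the cost function at slot $t$ to decisions at $t-1$ except the remaining budget. Without such a link, a solver could skip the expensive slot and still take the cheap ones.

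\textbf{Fallback: hardness inside one slot.} If no block gadget exists, I would look for hardness in the single-slot minimum-cost computation. The immersion is separable across Heads, so per-slot feasibility reduces to independent per-Head problems over the integer grids (c5)--(c7). If the bounds $B_{v_{\max}},f_{v_{\max}}$ are encoded in binary, each per-Head problem is a nonlinear integer minimization over an exponentially large grid. One would then try to embed Subset-Sum into the nonlinear cost/threshold geometry of \eqref{eqn:scaling} and \eqref{eqn:dtacc}.

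\textbf{Expected obstacle.} The main obstacle is realizing non-unit profits, or any genuinely combinatorial coupling, within the separable, per-slot all-or-nothing structure of $\boldsymbol{P}$. I expect that with unary-bounded grids the problem may in fact be polynomially solvable by per-slot minimum-cost computation followed by greedy selection. The proof must therefore rely on binary encoding of the grid bounds or on a carefully designed calibration gadget, and verifying that such a gadget exists is the step I would attack first.
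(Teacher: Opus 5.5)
\textbf{There is a genuine gap: no reduction is constructed, and the paper's own proof fails in exactly the way you predict.} Neither the block gadget realizing profit $v_i$ nor the single-slot Subset-Sum embedding is carried out, so your proposal does not prove anything yet.

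Your diagnosis of the naive construction is exactly right, and that construction is precisely the paper's proof. The paper takes $M=1$ and one active Head per slot with calibrated minimal fulfilling cost $w_t$. It then asserts that finding $S$ with $\sum_{t\in S}w_t\le B$ and $|S|\ge K$ ``is exactly the 0--1 knapsack decision problem.'' Since the objective is $|S|$, the values $v_i$ never enter. This is unit-profit knapsack, solved by serving the slots with the smallest $w_t$, so the paper's argument does not establish NP-hardness.

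The paper also tacitly lets an unserved slot cost nothing, although (c5)--(c7) make even the cheapest admissible control costly. Your shift by $\sum_i c_i^{\min}$ is the correct repair. Mirroring Theorem~\ref{thm:nphard-barP} would not help either: its Step~3 likewise equates $\sum_m L_m^{(2)}$, a count of MSPs, with $\sum_{i\in S}v_i$, which holds only for unit values.

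\textbf{The obstacle is not a missing gadget; both of your routes must fail.} The argument runs as follows.
\begin{itemize}
\item With nonnegative costs, \eqref{eq:budget_constraint} and \eqref{eq:budget_dynamics_noncoop} collapse to the single constraint $\sum_{t}{Cost_{Total}^{m}}^{(t)}\le Budget_m^{(0)}$ for each MSP. Distinct MSPs are uncoupled.
\item Within a slot, immersion and cost are separable across Heads. So the cheapest fulfilling allocation is a sum of per-Head minima.
\item Each per-Head minimum is computable in polynomial time even with binary-encoded bounds:
\begin{itemize}
\item The cost does not depend on $B_h^{(t)}$ at all, since ${Net_b^{VE}}^{(t)}=B_{v_{\min}}$. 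So $B_h^{(t)}$ simply maximizes $\hat Q_h^{(t)}$; with the paper's coefficients, $B_h^{(t)}=B_{v_{\max}}$.
\item The fidelity grid is tiny: $|\mathcal{B}_{v_h}|\le 1/\Delta_\beta+1=21$.
\item For each fixed $\beta$, $\hat F_h^{(t)}$ is affine in $f_h^{(t)}$, while the cost is increasing in $f_h^{(t)}$ (as $\kappa>0$). So the least threshold-meeting frame rate is a closed-form ceiling.
\end{itemize}
\item Hence $\boldsymbol{P}$ reduces to computing the incremental fulfillment cost $\Delta_{m,t}$ of each (MSP, slot) pair. Per MSP, one then fulfills the cheapest slots that fit the shifted budget, which is optimal by the usual exchange argument.
\end{itemize}

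Setting aside the bit-complexity of comparing sums of real powers exactly, $\boldsymbol{P}$-DEC is therefore polynomial. Neither your plan nor the paper exploits that numerical issue. So no reduction of this type can exist unless $\mathrm{P}=\mathrm{NP}$.

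The correct conclusion of your analysis is that the statement is unsupported as formulated, and very likely false. A cleverer calibration will not fix it. Hardness would require changing the model, for example heterogeneous per-slot rewards or a genuine coupling across Heads or MSPs.
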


\begin{proof}[Proof]
We again reduce from 0--1 \emph{Knapsack}. To isolate the source of combinatorial hardness, consider a single-MSP instance ($M{=}1$) with horizon $T$ and exactly one active Head at each $t\in\{1,\dots,T\}$. Calibrate parameters so that the \emph{minimal} cost to reach the immersion threshold at time $t$ equals a prescribed weight $w_t$.

Use the paper’s time-coupled budget semantics: allocation at $t$ reduces the remaining budget available at $t{+}1$. Set the initial budget to $B$ and interpret the per-timestep budget bound (c4) as the remaining capacity at each step. Then selecting the set of timesteps $S\subseteq\{1,\dots,T\}$ in which we fully serve the Head corresponds to picking items with total “weight” $\sum_{t\in S} w_t$ not exceeding $B$, and yields objective value $\sum_{t\in S} L_1^{(t)} = |S|$. Thus, deciding whether there exists a feasible allocation with $\sum_{t=1}^T L_1^{(t)} \ge K$ and total spend $\sum_{t\in S} w_t \le B$ is exactly the 0--1 knapsack decision problem. Hence $\boldsymbol{P}$-DEC is NP-hard.
\end{proof}

\textbf{Remark:}
{These reductions use only the all-or-nothing fulfillment rule (c3), the local budget constraints (c4), and, for $\boldsymbol{\bar{P}}$, the shared-pool coupling through (c10)--(c13) and the GCP update \eqref{eq:pool_dynamics}. They already show combinatorial hardness in simplified settings. In the full models, the objective and constraints additionally include non-convex functions (SSIM with a negative exponent in bitrate, the resource-scaling Hessian with negative mixed second derivatives, and the harmonic-mean DT accuracy), which the paper proves via Hessian analysis; thus both $\boldsymbol{P}$ and $\boldsymbol{\bar{P}}$ are non-convex MINLPs, further precluding polynomial-time global optimization and motivating DRL-based solution methods.}
\end{document}